\newcommand{\comment}[1]{}
\newcommand{\ket}[1]{|#1\rangle}
\newcommand{\bra}[1]{\langle#1|}
\newcommand{\braket}[2]{\langle#1|#2\rangle}
\newcommand{\etal}{\textit{et al.}\xspace}
\newcommand{\ie}{\textit{i.e.}\xspace}
\newcommand{\eg}{\textit{e.g.}\xspace}
\newcommand{\Tr}{\mathrm{Tr}}
\newcommand{\tr}[1]{%
{}^t\hspace{-0.8mm}#1%
}%
\newcommand{\rank}[1]{\mathrm{rank}~#1}
\newcommand{\Hilb}[1]{\mathcal{H}_{#1}}
\newcommand{\res}{\mathrm{r}}
\newcommand{\IN}{\mathrm{in}}
\newcommand{\I}{\mathbb{I}}
\newcommand{\R}[1]{\vec{R}_{#1}}
\newcommand{\Eps}[1]{\mathscr{E}_{#1}}
\newcommand{\norm}[1]{\lVert#1\rVert}
\newtheorem{lemma}{Lemma}
\newtheorem{claim}{Claim}
\begin{document}

\title{
   Entanglement cost of implementing controlled-unitary operations
}

\author{Akihito Soeda}
\affiliation{Graduate School of Science, the University of Tokyo, 7-3-1, Hongo, Bunkyo-ku, Tokyo, Japan.} 
\affiliation{Centre for Quantum Technologies, National University of Singapore, 3 Science Drive 2, 117543 Singapore, Singapore.}
\author{Peter S. Turner}
\affiliation{Graduate School of Science, the University of Tokyo, 7-3-1, Hongo, Bunkyo-ku, Tokyo, Japan.}
\author{Mio Murao}
\affiliation{Graduate School of Science, the University of Tokyo, 7-3-1, Hongo, Bunkyo-ku, Tokyo, Japan.}
\affiliation{Institute for Nano Quantum Information Electronics, the University of Tokyo, 4-6-1, Komaba, Meguro-ku, Tokyo, Japan.}

\date{\today}

\begin{abstract}
We investigate the minimum entanglement cost of the deterministic implementation of two-qubit controlled-unitary operations using local operations and classical communication (LOCC).  We show that any such operation can be implemented by a three-turn LOCC protocol, which requires at least 1 ebit of entanglement when the resource is given by a bipartite entangled state with Schmidt number 2. Our result implies that there is a gap between the minimum entanglement cost and the entangling power of controlled-unitary operations.  This gap arises due to the requirement of implementing the operations while oblivious to the identity of the inputs.
\end{abstract}

\keywords{entanglement assisted LOCC implementation, minimal entanglement resource, LOCC feasibility}

\maketitle

The theoretical cost of a quantum computation sets the minimum experimental resource requirements for realizing that quantum computation in practice.  Entanglement cost is one important resource that needs to be minimized, especially in distributed quantum computation \cite{DQC}, where global unitary operations are implemented by local (quantum) operations and classical communication (LOCC) \cite{LOCC}  assisted by entanglement.

Generally,  {\it  entanglement-assisted LOCC implementation} of an operation involves a quantum system consisting of two parts, namely an \textit{input system} and a \textit{resource system}.  The input system may start in an arbitrary {\it unknown} state representing quantum information.  On the other hand, the resource system is set to a particular known state that may depend on the unitary operation to be implemented, but not on the input quantum information.  We take ``LOCC implementation" to imply ``deterministic entanglement-assisted LOCC implementation'' in the rest of this Letter.

The amount of entanglement $E$ of the resource system is called the \textit{minimum entanglement cost} of LOCC implementation of a {\it unitary} operation $U$ if the following two properties hold, (i) LOCC implementation is feasible using an entangled state with $E$, and (ii) LOCC implementation of $U$ is infeasible for any resource state with less than $E$.  Property (i) guarantees that $E$ is an upper bound on the minimum entanglement cost, which can be proved by providing an explicit construction of the implementation protocol.  On the other hand, property (ii) guarantees that $E$ is the corresponding lower bound, and proving it is considerably more difficult, because one must show that the implementation fails for \textit{any} LOCC protocol.

Various attempts have been made to minimize $E$.  In \cite{Eisert}, Eisert \etal\ discovered a protocol that implements any two-qubit \textit{controlled}-unitary operation using LOCC assisted by a two-qubit maximally entangled state, or 1 {\it ebit} of entanglement.  Hence, for two-qubit controlled-unitary operations, the minimum entanglement is upper bounded by 1 ebit.
Since the maximum amount of entanglement created by performing a unitary operation, called its entangling power \cite{entanglingpower}, cannot exceed its minimum entanglement cost, we can conclude that the minimum entanglement cost of a controlled-unitary operation is 1 ebit if its entangling power is also 1 ebit.  A controlled-NOT operation and its local unitary equivalents are examples.  However, the minimum entanglement cost for general controlled-unitary operations has been an open question for a decade.

In order to estimate the minimum entanglement cost, let us consider the case where the input state for the LOCC implementation of $U$ is a pure product state.  If the identity of the input is also provided at the beginning of the implementation (but only after the preparation of the resource state), then we can employ an input-dependent LOCC implementation protocol.  In this case, LOCC implementability reduces to LOCC convertibility~\cite{Majorization} between the two joint states of the input and resource system before and after performing $U$.  If $U$ is a two-qubit controlled-unitary operation denoted by $U_u$, it can generate states with a Schmidt number up to 2.  For any given two pure entangled states with Schmidt number 2, one can be converted to the other by LOCC if and only if the former is more entangled than the latter. Hence, if the resource state has as much entanglement as the entangling power of $U_u$ and Schmidt number 2, the LOCC implementation of $U_u$ on a known pure product state is possible.

Because the entangling power of $U_u$ is less than 1 ebit in general~\cite{entanglingpower}, this might make it natural to expect that $U_u$ can be implemented using less than 1 ebit of entanglement resource with Schmidt number 2, even if the identity of the state remains unknown. Indeed, when the deterministicity condition has been relaxed, it is known that there is a situation where the average entanglement consumption can be reduced below 1 ebit in the asymptotic limit \cite{Cirac}.

In this Letter, however, we prove that for deterministic cases,  LOCC implementation for any given two-qubit controlled-unitary operation on \textit{unknown} inputs requires \textit{at least} 1 ebit irrespective of its entangling power, when the resource is given by a bipartite entangled state with Schmidt number 2.  Our result answers this long open question in entanglement theory.

The proof proceeds in the following four steps.  First, we show that any LOCC implementation protocol must be such that the number of each party's local measurements, or {\it turns}, is greater than or equal to 3.  Second, we present reductions of the descriptions of controlled-unitary operations and resource entanglement, using local unitary equivalence.  Third, we show that any $(n>3)$-turn LOCC implementation of a controlled-unitary operation can be transformed to an $(n-1)$-turn protocol by investigating three cases that cover all possible LOCC implementation protocols for controlled-unitary operations.  Finally, by induction, we see that an $n$-turn implementation protocol can be converted to a 3-turn one, for which it is proved that the amount of entanglement of the resource state must be 1 ebit.

We start by describing a LOCC implementation mathematically.  $\Hilb{\IN}$ and $\Hilb{\res}$ will be used to denote the Hilbert space of the input system and the resource system, respectively.     We use $\{ \ket{0}, \ket{1} \}$ as a basis for a qubit Hilbert space.  To specify the Hilbert space where a given state belongs, we use the corresponding subscripts in both vector and operator notations, \eg, $\ket{0}_{X} \in \Hilb{X}$ and $\rho_{X} \in \mathcal{S}(\Hilb{X})$ for the Hilbert space $\mathcal{H}_X$.  A LOCC implementation of a unitary operation $U$ on $\Hilb{\IN}$ assisted by a given entangled state represented by $\rho_\res $ is a completely positive and trace-preserving (CPTP) map $\Gamma: \mathcal{S} (\Hilb{\IN} \otimes \Hilb{\res}) \longrightarrow \mathcal{S} (\Hilb{\IN})$, such that it is implementable by LOCC and satisfies
\begin{equation} \label{implementation}
 \Gamma(\rho_\IN \otimes \rho_\res) = U \rho_\IN U^\dag
\end{equation}
for all input states $\rho_\IN$.  The state $\rho_\res$ is a resource state for such a LOCC implementation of $U$.

The input system consists of two parts, namely, Alice's input qubit and Bob's input qubit, whose corresponding Hilbert spaces are denoted by $\Hilb{A,\IN}$ and $\Hilb{B,\IN}$, respectively.  The resource system also consists of a bipartite system shared between Alice and Bob, where the corresponding Hilbert spaces are denoted by $\Hilb{A,\res}$ and $\Hilb{B,\res}$.  We call $\Hilb{A,\IN} \otimes \Hilb{A,\res}$ \textit{Alice's subsystem} and $\Hilb{B,\IN} \otimes \Hilb{B,\res}$ \textit{Bob's subsystem}, where $\Hilb{A}$ and $\Hilb{B}$ are used to denote the Hilbert space corresponding to each party's subsystem.  We will sometimes abbreviate $\ket{k}_{A,\IN} \otimes \ket{l}_{B,\IN}$ as $\ket{kl}_{\IN}$ and $\ket{k}_{A,\res} \otimes \ket{l}_{B,\res}$ as $\ket{kl}_{\res}$.

At each turn in a general two-party LOCC protocol, either Alice or Bob performs any generalized measurement operation, which is described by two sets of measurement operators $\{ M^{(r)} \}_r$  for Alice and $\{ K^{(r)} \}_r$  for Bob, where $r$ denotes a measurement outcome on each subsystem, and then classically communicates $r$ to the other party.  Note that unitary operations on each subsystem are included as a special case where $r$ has only one value.  Since the only constraint on the set of measurement operators is the completeness relation, and operations at each turn can depend arbitrarily on the previous turns, the set of possible generalized measurement operations in LOCC is very large.

In order to manage this set, we focus on the accumulated effect that is brought on by successive operations in a given LOCC protocol.  Let $r_k$ denote the measurement outcome of the $k$-th turn, and $\R{k} = (r_1, r_2, \ldots , r_k)$ denote the sequence of measurement outcomes of the first $k$ turns.  The $k$th generalized measurement operation is a function of $\R{k-1}$.  With a slight abuse of notation, we set
$
 \R{k} = (\R{k-1},r_k) = (\R{k-2},r_{k-1},r_k) = \ldots
$
and so on.

We use $M^{(r_k|\R{k-1})}$ and $K^{(r_k|\R{k-1})}$ to denote Alice's and Bob's measurement operators at the $k$th turn, respectively.  To make the definition consistent, if the $k$-th turn is not Alice's turn to perform an operation, we set
$M^{(r_k|\R{k-1})} = \I_A$, where $\I_A$ is the identity operator on $\Hilb{A}$.  Bob's measurement operators are defined similarly.

We define \textit{accumulated operators} $A^{\R{k}}$ and $B^{\R{k}}$ that represent the accumulated effect of the measurement operators
 after $k$ turns on Alice's and Bob's qubits, respectively, by
$A^{\R{k}} = M^{(r_k|\R{k-1})}M^{(r_{k-1}|\R{k-2})} \cdots M^{(r_1|\R{0})}$
and
$B^{\R{k}} = K^{(r_k|\R{k-1})}K^{(r_{k-1}|\R{k-2})} \cdots K^{(r_1|\R{0})}$,
where $M^{(r_1|\R{0})} = M^{(r_1)}$ and $K^{(r_1|\R{0})} = K^{(r_1)}$.  The accumulated operators naturally form a tree graph where each set of outcomes $\R{k}$ labels a \textit{branch} of length $k$.

The lengths of the branches in a LOCC protocol are not necessarily the same, but Alice and Bob can repeatedly perform identity operations to fill the shorter ones so that all branches have the same length $n$.  Hence, we only need to consider protocols such that the target operation $U$ is implemented after $n$ turns.  With this reduction and rewriting the CPTP map using accumulated operators $A^{\R{n}}$ and $B^{\R{n}}$, Eq.\ (\ref{implementation}) now reads
$
 \sum_{\R{n}} (A^{\R{n}} \otimes B^{\R{n}}) (\rho_\IN \otimes \rho_\res) (A^{\R{n}} \otimes B^{\R{n}})^\dag = U \rho_\IN U^\dag.
$

Because $U$ is a unitary operation, $U \rho_\IN U^\dag$ is a pure state when $\rho_\IN$ is pure.  Thus, for each term in the summation, deterministicity requires a positive coefficient $p^{\R{n}}$ exist such that
\begin{equation}\label{implementation_single_ao}
 (A^{\R{n}} \otimes B^{\R{n}}) (\rho_\IN \otimes \rho_\res) (A^{\R{n}} \otimes B^{\R{n}})^\dag = p^{\R{n}} U \rho_\IN U^\dag.
\end{equation}
From this equation, we can see that a mixed state $\rho_\res = \sum_k p_k \ket{\psi_k}\bra{\psi_k}$ is a resource state if and only if each of $\ket{\psi_k}\bra{\psi_k}$ can be used as a resource.  From here on, we only consider pure states for the entanglement resource.

As the first step of our proof, we show that the number of turns, $n$, must be greater than or equal to 3.  It is shown in \cite{cccapacity} that if a global unitary operation $U$ is implementable by a LOCC protocol between Alice and Bob, the same protocol can be used for sending nonzero classical information from Alice to Bob or vice versa by choosing an appropriate input state for each party. Therefore, if LOCC implementation of $U$ with $n=2$ turns is possible, a one-way LOCC protocol from Alice to Bob should be possible to send classical information from Bob to Alice. However, such a one-way LOCC protocol cannot change Alice's reduced density matrix depending on Bob's input state, hence, classical communication from Bob to Alice is impossible. Thus, taking the contrapositive, LOCC implementation of $U$ with $n=2$ turns is impossible.  Impossibility with $n=1$ protocols is trivial.

The argument so far holds for LOCC implementation of arbitrary (nontrivial) two-qubit unitary operations.  We proceed to the second step of our proof by specifying  them to be controlled-unitary operations.  A general form is then given by $U_u=v_1 \otimes v_2 \cdot (\ket{0} \bra{0} \otimes \I + \ket{1} \bra{1} \otimes u) \cdot w_1 \otimes w_2$, where $v_1, v_2, u, w_1,w_2$ are single-qubit local unitary operations.  $U_u$ can be further locally transformed to
\begin{equation} \label{eq:LOCCimp_Utheta}
U_\theta=\ket{00}\bra{00}+\ket{01}\bra{01}+\ket{10}\bra{10}+e^{i\theta}\ket{11}\bra{11}
\end{equation}
where $\theta$ is a nonlocal parameter,  by performing appropriate local unitary operations that depend only upon $U_u$.  The local unitary operations taking $U_u$ to $U_\theta$ are known to both Alice and Bob, so without loss of generality we only need to consider protocols implementing the single-parameter family $U_\theta$.

In this step, we also specify our resource to be a bipartite entangled pure state with Schmidt number (the number of nonzero coefficients in its Schmidt decomposition) 2.   Since any such state can be transformed into a two-qubit entangled state by local unitary operations, we can reduce the form of this resource state to $ \ket{\res}_\res = (X_{A,\res} \otimes \I_{B,\res}) \ket{\Phi^+}_\res$, where $X \equiv \sqrt{\mu} \ket{0}\bra{0}+\sqrt{(1-\mu)} \ket{1}\bra{1}$ is a Choi matrix~\cite{Choi}  and $\ket{\Phi^+}_\res \equiv \ket{00}_\res + \ket{11}_\res$, by taking an approriate choice of basis.
With these notations, Eq.\ (\ref{implementation_single_ao}) now reads
\begin{equation}\label{eq:AccuAction}
(A^{\R{n}} \otimes B^{\R{n}}) (\ket{ij}_\IN \otimes \ket{\res}_\res) = c^{\R{n}} (U_\theta \ket{ij}_\IN)
\end{equation}
for all $\ket{ij}_\IN$ and some input-independent $c^{\R{n}} \in \mathbb{C}$ such that $|c^{\R{n}}|^2 = p^{\R{n}}$ for all $\R{n}$.

We now proceed to the third and most demanding step of our proof.   As $A^{\R{n}}$ and $B^{\R{n}}$ are operators taking a four-dimensional Hilbert space to a two-dimensional Hilbert space, by introducing unnormalized states $\ket{a^{\R{n}}_{ki}}_{A,\res}$ and $\ket{b^{\R{n}}_{lj}}_{B,\res}$, we have
\begin{eqnarray}
 A^{\R{n}} &=& \sum_{k,i=0,1} \ket{k}_{A,\IN}\bra{i} \otimes \bra{a^{\R{n}}_{ki}}_{A,\res},\\
 B^{\R{n}} &=& \sum_{l,j=0,1} \ket{l}_{B,\IN}\bra{j} \otimes \bra{b^{\R{n}}_{lj}}_{B,\res}.
\end{eqnarray}
For clarity, the subscripts of states will be dropped from here on.  Equation (\ref{eq:AccuAction}) now implies the following conditions,
\begin{equation}\label{eq:Blocks}
\bra{a^{\R{n}}_{ki}} X \ket{b^{\R{n}*}_{lj}} = \left\{ \begin{array}{ll}
c^{\R{n}} \bra{i j} U_\theta \ket{i j} & k=i \,\mathrm{and}\, l=j\\
0 & \mathrm{otherwise}, \end{array} \right.
\end{equation}
where we have used $\ket{b^{\R{n}*}_{lj}}$ to denote the complex conjugate of $\ket{b^{\R{n}}_{lj}}$ in the $\{ \ket{0}, \ket{1} \}$ basis.  The vectors $\bra{a_{00}^{\R{n}}}$ and $\bra{a_{11}^{\R{n}}}$ must be linearly independent because $e^{i \theta} \neq 1$, and for the same reason so must $\bra{b_{00}^{\R{n}}}$ and $\bra{b_{11}^{\R{n}}}$ be.  It is easy to show that $\bra{a_{ki}^{\R{n}}}$ and $\bra{b_{lj}^{\R{n}}}$ must be zero-vectors if $k \neq i$ and $l \neq j$, respectively.  Therefore, $A^{\R{n}}$ and $B^{\R{n}}$ have the following forms
\begin{align}
 \label{eq:ARn} A^{\R{n}} &= \ket{0}\bra{0} \otimes \bra{a^{\R{n}}_{00}} + \ket{1}\bra{1} \otimes \bra{a^{\R{n}}_{11}} ,\\
 \label{eq:BRn} B^{\R{n}} &= \ket{0}\bra{0} \otimes \bra{b^{\R{n}}_{00}} + \ket{1}\bra{1} \otimes \bra{b^{\R{n}}_{11}}.
\end{align}

Without loss of generality we now consider the case where Bob performs the final ($n$-th) turn in an implementation protocol.  As defined above, the accumulated operators for the previous three turns then satisfy the following relations:
\begin{eqnarray}
 \label{eq:LOCCimp_onA1} A^{\R{n-1}} &=& A^{\R{n}} =\; A^{(\R{n-1},r_n)},\\
 \label{eq:LOCCimp_onA2} A^{\R{n-2}\dag} A^{\R{n-2}} &=&\sum_{r_{n-1}} A^{(\R{n-2},r_{n-1})\dag} A^{(\R{n-2},r_{n-1})}, \\
 \label{eq:LOCCimp_onA3} A^{\R{n-3}} &=& A^{\R{n-2}} =\; A^{(\R{n-3},r_{n-2})},
\end{eqnarray}
and
\begin{eqnarray}
 \label{eq:LOCCimp_onB1} B^{\R{n-1}\dag} B^{\R{n-1}} &=& \sum_{r_n} B^{(\R{n-1},r_n)\dag} B^{(\R{n-1},r_n)} ,\\
 \label{eq:LOCCimp_onB2} B^{\R{n-2}} &=& B^{\R{n-1}} =\; B^{(\R{n-2},r_{n-1})},\\
 \label{eq:LOCCimp_onB3} B^{\R{n-3}\dag} B^{\R{n-3}} &=& \sum_{r_{n-2}} B^{(\R{n-3},r_{n-2})\dag} B^{(\R{n-3},r_{n-2})}.
\end{eqnarray}

We show that any $(n\geq 3)$-turn LOCC protocol implementing $U_\theta$ can be transformed into a $(n-1)$-turn LOCC protocol by investigating three types of transformations conditioned on the accumulated operators that arise at the $(n-3)$-th turn, $A^{\R{n-3}}$ and $B^{\R{n-3}}$.  A rigorous proof for all three cases requires intensive analysis of the relations that hold for the accumulated operators, and is given in the supplementary material \cite{supplemental}.  In what follows, we concentrate on that which is important for understanding the three transformations.

Defining the block elements  $A_{kk}^{\R{n-3}}$ and $B_{ll}^{\R{n-3}}$ of the accumulated operators $A^{\R{n-3}}$ and $B^{\R{n-3}}$ by
\begin{eqnarray}
 \label{fullA} A_{kk}^{\R{n-3}} &=& \sqrt{\sum_{r_{n-1}} \ket{a^{\R{n}}_{kk}}\bra{a^{\R{n}}_{kk}}},\\
 \label{fullB} B_{ll}^{\R{n-3}} &=& \sqrt{\sum_{r_{n-2},r_n} \ket{b^{\R{n}}_{ll}}\bra{b^{\R{n}}_{ll}}},
\end{eqnarray}
and using Eqs.\ (\ref{eq:ARn}) to (\ref{eq:LOCCimp_onB3}), we obtain the following relations
\begin{eqnarray} \label{eq:ARn-3}
  A^{\R{n-3}\dag} A^{\R{n-3}} &=& \sum_{k=0,1} \ket{k}\bra{k} \otimes (A_{kk}^{\R{n-3}})^2, \\
 B^{\R{n-3}\dag} B^{\R{n-3}} &=& \sum_{l=0,1} \ket{l}\bra{l} \otimes (B_{ll}^{\R{n-3}})^2.
\end{eqnarray}
The ranks of $A_{kk}^{\R{n-3}}$ and $B_{ll}^{\R{n-3}}$ cannot be  taken independently for successful LOCC implementable protocols.  Indeed, we have the following lemma, whose proof is given in \cite{supplemental}.
\begin{lemma} \label{lem:equalranks}
The ranks of  $A_{00}^{\R{n-3}}$ and $ A_{11}^{\R{n-3}}$ must be the same for any successful protocol.  Additionally, if the rank of $A_{00}^{\R{n-3}}$ is 2, the ranks of $B_{00}^{\R{n-3}}$ and $B_{11}^{\R{n-3}}$ must be the same.
\end{lemma}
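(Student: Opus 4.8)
The plan is to read the lemma as a statement about a single occurring sub-branch $\R{n-3}$ together with its three remaining turns, which by Eqs.~(\ref{eq:LOCCimp_onA1})--(\ref{eq:LOCCimp_onB3}) are Bob's ($n-2$), Alice's ($n-1$), and Bob's ($n$). Since each block $A_{kk}^{\R{n-3}}$ and $B_{ll}^{\R{n-3}}$ is a positive operator on a two-dimensional resource space, its rank is $0$, $1$, or $2$. First I would eliminate the value $0$: if $A_{00}^{\R{n-3}}=0$ then $\ket{a^{\R{n}}_{00}}=0$ for every $r_{n-1}$, so Eq.~(\ref{eq:Blocks}) forces $c^{\R{n}}=0$ throughout the sub-branch (because $\bra{00}U_\theta\ket{00}=1$), i.e.\ the sub-branch carries zero probability and does not occur. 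Hence on any occurring sub-branch every block is nonzero, all ranks lie in $\{1,2\}$, and the lemma concerns when the value $1$ is allowed.

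Next I would put Eq.~(\ref{eq:Blocks}) into a usable algebraic form. With the matrix $\Theta=\twobytwo{1}{1}{1}{e^{i\theta}}$, invertible because $e^{i\theta}\neq1$, and the columnwise collections $P^{\R{n}}=\ket{a^{\R{n}}_{00}}\bra{0}+\ket{a^{\R{n}}_{11}}\bra{1}$ and $Q^{\R{n}}=\ket{b^{\R{n}*}_{00}}\bra{0}+\ket{b^{\R{n}*}_{11}}\bra{1}$, the four scalar conditions become $P^{\R{n}\dagger}X Q^{\R{n}}=c^{\R{n}}\Theta$. As $X,\Theta$ are invertible and the pairs $\{a_{00},a_{11}\}$, $\{b^{*}_{00},b^{*}_{11}\}$ are linearly independent, both $P^{\R{n}}$ and $Q^{\R{n}}$ are invertible, whence $Q^{\R{n}}=c^{\R{n}}X^{-1}(P^{\R{n}\dagger})^{-1}\Theta$ and, up to a branch-dependent scalar, $P^{\R{n}}\propto X^{-1}(Q^{\R{n}\dagger})^{-1}\Theta^{\dagger}$. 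Two orthogonality relations drop out: $X\ket{a^{\R{n}}_{00}}\perp(\ket{b^{\R{n}*}_{11}}-\ket{b^{\R{n}*}_{00}})$ and $X\ket{a^{\R{n}}_{11}}\perp(e^{i\theta}\ket{b^{\R{n}*}_{00}}-\ket{b^{\R{n}*}_{11}})$. A structural reduction then follows from the fact that Alice does not act on the final turn, so $\ket{a^{\R{n}}_{kk}}$ is $r_n$-independent; reading $\ket{a^{\R{n}}_{kk}}\propto X^{-1}(Q^{\R{n}\dagger})^{-1}\Theta^{\dagger}\ket{k}$ and using that $\{\Theta^{\dagger}\ket{0},\Theta^{\dagger}\ket{1}\}$ is a fixed basis, I conclude that $r_n$ only rescales Bob's vectors without changing their directions. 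With this, $\rank{A_{kk}^{\R{n-3}}}$ is the dimension of the span of $\{\ket{a^{\R{n}}_{kk}}\}$ as $r_{n-1}$ varies (independent of $r_{n-2}$ by Eqs.~(\ref{eq:LOCCimp_onA2}) and~(\ref{fullA})), and $\rank{B_{ll}^{\R{n-3}}}$ is the dimension of the span of Bob's direction vectors as $r_{n-2}$ varies (independent of $r_{n-1}$ by Eqs.~(\ref{eq:LOCCimp_onB3}) and~(\ref{fullB})).

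I would then argue each part by contradiction. For the first, suppose $\rank{A_{00}^{\R{n-3}}}=1\neq2=\rank{A_{11}^{\R{n-3}}}$. Rank $1$ means all $X\ket{a^{\R{n}}_{00}}$ lie on a line, so by the first orthogonality relation the differences $\ket{b^{\R{n}*}_{11}}-\ket{b^{\R{n}*}_{00}}$ collapse to a line as well; I would aim to push this collapse through the second orthogonality relation, using the duality $Q^{\R{n}}=c^{\R{n}}X^{-1}(P^{\R{n}\dagger})^{-1}\Theta$, to force the $X\ket{a^{\R{n}}_{11}}$ onto a line, contradicting rank $2$. For the second part I would assume $\rank{A_{00}^{\R{n-3}}}=2$, so by the first part $\rank{A_{11}^{\R{n-3}}}=2$ as well, and show that Alice's two full-rank families pin Bob's vectors $\ket{b^{\R{n}*}_{00}}$ and $\ket{b^{\R{n}*}_{11}}$ together through the same duality, so that their spans across $r_{n-2}$ have equal dimension.

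I expect the main obstacle to be precisely where the per-branch identities must be combined with the completeness relations. The orthogonality relations pair Alice's and Bob's vectors only at matching outcomes, so $X\ket{a^{\R{n}}_{11}}$ is constrained against Bob's data carrying the \emph{same} $r_{n-1}$ rather than against a whole span; consequently the bilinear content of Eq.~(\ref{eq:Blocks}) alone does not force the ranks to agree, since one can exhibit invertible families of $P^{\R{n}}$ for which the two column-spans have different dimensions. The required rigidity must therefore come from the two completeness relations acting transversally: Eq.~(\ref{fullA}) ties Alice's vectors across $r_{n-1}$ for each fixed $r_{n-2}$, while Eq.~(\ref{fullB}) ties Bob's vectors across $r_{n-2}$ for each fixed $r_{n-1}$. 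Interleaving these two sums carefully, so that a rank deficiency conditioned on one party's outcome propagates into a constraint conditioned on the other's, is the technical heart of the argument and is what the detailed case analysis in the supplementary material must supply.
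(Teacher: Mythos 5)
Your setup is sound and in fact coincides with the paper's: the matrix identity $P^{\R{n}\dagger}XQ^{\R{n}}=c^{\R{n}}\Theta$ is exactly the paper's matrix $L^{\R{n}}$ in disguise, the observation that $r_n$ only rescales Bob's vectors is the paper's collinearity relation, and you correctly identify that the per-branch bilinear constraints alone cannot force the ranks to agree. The problem is that your proposal stops exactly there. Both halves of the argument are stated as intentions (``I would aim to push this collapse through\ldots'', ``show that Alice's two full-rank families pin Bob's vectors together''), and your closing paragraph concedes that the interleaving of the two completeness relations ``is the technical heart of the argument and is what the detailed case analysis in the supplementary material must supply.'' That heart is precisely what a proof of this lemma has to contain, so as written this is a correct framing of the problem rather than a proof of the statement.

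For the record, the mechanism the paper uses for the first claim is the following. Because Bob's level-$(n-2)$ vectors are independent of $r_{n-1}$ up to a phase, the $2\times 2$ Gram matrix of the pair $A_{kk}^{\R{n-2}}X\ket{b_{00}^{\R{n-2}*}}$, $A_{kk}^{\R{n-2}}X\ket{b_{11}^{\R{n-2}*}}$ can be evaluated by summing $\kappa_l^{\R{n}*}\kappa_{l'}^{\R{n}}$ over $r_{n-1}$; the diagonal entries turn out to be identical for $k=0$ and $k=1$ (because $|\kappa_0^{\R{n}}|=|\kappa_1^{\R{n}}|$), and the off-diagonal entries differ only by the phase $e^{-i\theta}$. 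Hence the two Gram determinants are equal, so one pair of image vectors is linearly dependent iff the other is; since $X\ket{b_{00}^{\R{n-2}*}}$ and $X\ket{b_{11}^{\R{n-2}*}}$ are themselves linearly independent, this is exactly $\rank{A_{00}^{\R{n-3}}}=\rank{A_{11}^{\R{n-3}}}$. The second claim requires substantially more: a rigidity claim producing a unitary $u^{\R{n-2}}$ with $A_{11}X\ket{b_{00}^*}=uA_{00}X\ket{b_{00}^*}$ and $A_{11}X\ket{b_{11}^*}=e^{i\theta}uA_{00}X\ket{b_{11}^*}$, the operator $T=u^{\dagger}A_{11}A_{00}^{-1}$ with eigenvalues $1,e^{i\theta}$ and unit-modulus determinant (hence singular values $\sqrt{\lambda},1/\sqrt{\lambda}$), and a split into the cases $\lambda\neq 1$ (where a determinant identity between $\tr{B}_{00}^{\R{n-3}}B_{00}^{\R{n-3}*}$ and $\tr{B}_{11}^{\R{n-3}}B_{11}^{\R{n-3}*}$ gives the rank equality) and $\lambda=1$ (where an orthogonality and trace argument does). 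None of this appears in your proposal, so the gap is not a technicality but the entire substance of the lemma.
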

This implies that all successful protocols can be classified into the following three cases;  (a) $\rank{A_{00}^{\R{n-3}}} = 1$ (b) $\rank{A_{00}^{\R{n-3}}} = 2$,  $\rank{B_{00}^{\R{n-3}}}=1$   (c) $\rank{A_{00}^{\R{n-3}}} = 2$,  $\rank{B_{00}^{\R{n-3}}}=2$.

For cases (a) and (b), we have \cite{supplemental},
\begin{lemma} \label{lem:rank1}
If $\rank{A_{00}^{\R{n-3}}} = 1$, then $\{M^{(r_{n-1}|\R{n-2})}\}_{r_{n-1}}$ is simulateable by a random unitary operation $\{ p^{\R{n-1}}, U^{\R{n-1}} \}$.  On the other hand, if $\rank{B_{00}^{\R{n-3}}} = 1$, then $\{K^{(r_{n-2}|\R{n-3})}\}_{r_{n-2}}$ can be simulated by a random unitary operation $\{ q^{\R{n-2}}, V^{\R{n-2}} \}$.
\end{lemma}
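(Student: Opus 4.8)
The plan is to convert the rank hypothesis into a rigidity statement about Alice's turn-$(n-1)$ instrument, and then to read off that this instrument can only imprint an outcome-dependent phase on the control qubit while measuring the resource in a fixed way.

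First I would record that the accumulated operator just before Alice's last active turn is control-block-diagonal, $A^{\R{n-3}}=\sum_{k=0,1}\ket{k}\bra{k}\otimes\hat{A}_{kk}$ with $\hat{A}_{kk}$ acting on $\Hilb{A,\res}$ (the same no-control-flip reasoning that produces (\ref{eq:ARn}), and already reflected in (\ref{eq:ARn-3})). Writing Alice's $(n-1)$-th measurement operators as $M^{(r_{n-1}|\R{n-2})}=\sum_k\ket{k}\bra{k}\otimes\bra{m^{(r_{n-1})}_k}$ and using $A^{\R{n}}=M^{(r_{n-1}|\R{n-2})}A^{\R{n-3}}$, which follows from (\ref{eq:LOCCimp_onA1}) and (\ref{eq:LOCCimp_onA3}), gives $\bra{a^{\R{n}}_{kk}}=\bra{m^{(r_{n-1})}_k}\hat{A}_{kk}$. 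Substituting this into the definition (\ref{fullA}) and using the completeness of Alice's measurement, $\sum_{r_{n-1}}\ket{m^{(r_{n-1})}_k}\bra{m^{(r_{n-1})}_k}=\I_{\res}$ (equivalently (\ref{eq:LOCCimp_onA2})), collapses the sum to $(A^{\R{n-3}}_{kk})^{2}=\hat{A}_{kk}^{\dagger}\hat{A}_{kk}$, so that $\rank{A^{\R{n-3}}_{kk}}=\rank{\hat{A}_{kk}}$. The hypothesis $\rank{A^{\R{n-3}}_{00}}=1$ thus forces $\hat{A}_{00}$ to be rank one, and Lemma~\ref{lem:equalranks} promotes this to $\hat{A}_{11}$ being rank one as well. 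Writing $\hat{A}_{kk}=\ket{u_k}\bra{w_k}$, every block vector collapses onto a single line, $\bra{a^{\R{n}}_{kk}}=\sigma^{(r_{n-1})}_k\bra{w_k}$ with $\sigma^{(r_{n-1})}_k=\braket{m^{(r_{n-1})}_k}{u_k}$; the linear independence of $\bra{a^{\R{n}}_{00}}$ and $\bra{a^{\R{n}}_{11}}$ then guarantees $\sigma^{(r_{n-1})}_0,\sigma^{(r_{n-1})}_1\neq 0$.

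Next I would reduce the claim to a single scalar condition. Because the image of $\hat{A}_{kk}$ is the line $\ket{u_k}$, the resource entering turn $n-1$ is, in control block $k$, pinned to $\ket{u_k}$, so the only data the outcome $r_{n-1}$ can extract is carried by the amplitude $\sigma^{(r_{n-1})}_k$. On the physically reachable states $M^{(r_{n-1}|\R{n-2})}$ therefore acts on the control qubit as the diagonal filter $\sigma^{(r_{n-1})}_0\ket{0}\bra{0}+\sigma^{(r_{n-1})}_1\ket{1}\bra{1}$ composed with a fixed resource measurement. Its polar form is $\sqrt{p^{\R{n-1}}}\,\diag(1,\sigma^{(r_{n-1})}_1/\sigma^{(r_{n-1})}_0)$ with $p^{\R{n-1}}=|\sigma^{(r_{n-1})}_0|^{2}$ exactly when $|\sigma^{(r_{n-1})}_0|=|\sigma^{(r_{n-1})}_1|$, and otherwise it is a genuinely non-unitary filtering. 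Hence $\{M^{(r_{n-1}|\R{n-2})}\}_{r_{n-1}}$ is simulateable by the random unitary $\{p^{\R{n-1}},U^{\R{n-1}}\}$ with $U^{\R{n-1}}=\diag(1,\sigma^{(r_{n-1})}_1/\sigma^{(r_{n-1})}_0)$ if and only if $|\sigma^{(r_{n-1})}_0|=|\sigma^{(r_{n-1})}_1|$ for every outcome.

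Finally I would establish this modulus equality from the deterministic constraints. Feeding $\bra{a^{\R{n}}_{kk}}=\sigma^{(r_{n-1})}_k\bra{w_k}$ into the four surviving conditions of (\ref{eq:Blocks}) for $U_\theta$ gives $\sigma^{(r_{n-1})}_0\bra{w_0}X\ket{b^{*}_{00}}=\sigma^{(r_{n-1})}_0\bra{w_0}X\ket{b^{*}_{11}}=\sigma^{(r_{n-1})}_1\bra{w_1}X\ket{b^{*}_{00}}=c^{\R{n}}$ and $\sigma^{(r_{n-1})}_1\bra{w_1}X\ket{b^{*}_{11}}=c^{\R{n}}e^{i\theta}$ for every branch. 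Eliminating $c^{\R{n}}$ expresses $\sigma^{(r_{n-1})}_0/\sigma^{(r_{n-1})}_1$ as ratios of Bob's overlaps; the hard part will be to upgrade these phase-and-ratio identities into the modulus equality $|\sigma^{(r_{n-1})}_0|=|\sigma^{(r_{n-1})}_1|$, which requires feeding them back through the completeness of Bob's turn-$n$ measurement (\ref{eq:LOCCimp_onB1}), the non-degeneracy of $X$ (both $\mu$ and $1-\mu$ nonzero, since the resource has Schmidt number $2$), and $e^{i\theta}\neq 1$. This is precisely the intensive accumulated-operator bookkeeping deferred to \cite{supplemental}. The statement for Bob's turn-$(n-2)$ measurement follows by the mirror-image argument: one first marginalizes the final turn $n$ using (\ref{eq:LOCCimp_onB1}) so that (\ref{fullB}) reduces to the single-turn completeness (\ref{eq:LOCCimp_onB3}), invokes Lemma~\ref{lem:equalranks} to equate $\rank{B^{\R{n-3}}_{00}}$ and $\rank{B^{\R{n-3}}_{11}}$ in the regime (case (b), where $\rank{A^{\R{n-3}}_{00}}=2$) in which this half of the lemma is applied, and then repeats the construction with the roles of $A$ and $B$ interchanged and $\{K^{(r_{n-2}|\R{n-3})}\}_{r_{n-2}}$ in place of $\{M^{(r_{n-1}|\R{n-2})}\}_{r_{n-1}}$.
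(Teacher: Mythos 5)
Your outline does track the paper's strategy: the rank-one hypothesis forces the vectors $\bra{a_{kk}^{(\R{n-2},r_{n-1},0)}}$ to be collinear across outcomes $r_{n-1}$, so Alice's $(n-1)$-th operation acts on the reachable states as an outcome-dependent diagonal filter $\diag(\sigma_0^{(r_{n-1})},\sigma_1^{(r_{n-1})})$ on the control qubit composed with a fixed resource operation, and the lemma reduces to the per-outcome modulus equality $|\sigma_0^{(r_{n-1})}|=|\sigma_1^{(r_{n-1})}|$ (the paper's $|{\gamma'}_0|=|{\gamma'}_1|$). A side remark: you should not assume the intermediate measurement operator itself has the block form $\sum_k\ket{k}\bra{k}\otimes\bra{m_k^{(r_{n-1})}}$ with a completeness relation $\sum_{r_{n-1}}\ket{m_k^{(r_{n-1})}}\bra{m_k^{(r_{n-1})}}=\I$; only the accumulated operators are constrained to be block diagonal. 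This detour is also unnecessary, since $(A_{kk}^{\R{n-3}})^2=\sum_{r_{n-1}}\ket{a_{kk}^{\R{n}}}\bra{a_{kk}^{\R{n}}}$ is already the content of Eqs.~(\ref{fullA}), (\ref{eq:LOCCimp_onA2}) and (\ref{eq:LOCCimp_onA3}), and a rank-one sum of rank-one projectors directly gives collinearity.

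The genuine gap is at the crux: you declare the modulus equality to be ``the hard part'' and defer it to the supplementary material---but this lemma \emph{is} the supplementary material, so the decisive step is never proved. The paper closes it using machinery from the proof of Lemma~\ref{lem:equalranks}: the vectors $\bra{b_{ll}^{(\R{n-2},r_{n-1})}}$ depend on $r_{n-1}$ only through a phase (Eqs.~(\ref{eq:LOCCimp_bRn-2.1}) and (\ref{eq:LOCCimp_bRn-2.2})), and Eq.~(\ref{eq:LOCCimp_*}) gives $\bra{a_{kk}^{\R{n-1}}}X\ket{b_{ll}^{\R{n-2}*}}=\kappa_l^{\R{n}}\Eps{kl}$ with $|\Eps{kl}|=1$ and $|\kappa_0^{\R{n}}|=|\kappa_1^{\R{n}}|$; substituting $\bra{a_{kk}^{(\R{n-2},r_{n-1})}}={\gamma'}_k^{(\R{n-2},r_{n-1})}\bra{a_{kk}^{(\R{n-2},0)}}$ and comparing the $k=0$ and $k=1$ relations at $r_{n-1}=0$ and at general $r_{n-1}$ yields $|{\gamma'}_0^{(\R{n-2},r_{n-1})}|=|{\gamma'}_1^{(\R{n-2},r_{n-1})}|$ (this is how the paper reads Eq.~(\ref{eq:LOCCimp_tau1})). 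Your own four displayed conditions contain exactly this information, but you stop short of extracting it. Two further pieces are also missing: the explicit random unitary requires normalizing by ${g'}^{\R{n-2}}=\sqrt{\sum_{r_{n-1}}|{\gamma'}_0^{(\R{n-2},r_{n-1})}|^2}$ so that the $p^{\R{n-1}}$ sum to one and absorbing the fixed resource part into an isometry via Claim~\ref{lemma:polar_decomp}; and the case-(b) statement for Bob needs the parallel computation with Eq.~(\ref{eq:LOCCimp_A0b0=A0b1}) in place of Eq.~(\ref{eq:LOCCimp_tau1}), not just an appeal to symmetry, because the roles of the two parties in the turn ordering are not symmetric.
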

When the operation at a given turn is a random unitary operation, the ``outcome''  of the operation $r_{n-1}$ for case (a) [or $r_{n-2}$ for case (b)] can be {\it chosen} by Bob for case (a) [Alice for case (b)] and can be communicated with the measurement outcome $r_{n-2}$ (or $r_{n-3}$).   Then the communication of $r_{n-1}$ from Alice to Bob (or $r_{n-2}$  from Bob to Alice) is no longer necessary. Therefore, the number of turns can be decreased from $n$ to $n-1$.

 For case (c),  we have \cite{supplemental,Andersson},
\begin{lemma} \label{lem:whenrankA2}
 If $\rank{A_{00}^{\R{n-3}}} = \rank{B_{00}^{\R{n-3}}} = 2$, then it is possible to replace $\{ K^{(r_{n-2}|\R{n-3})}\}_{r_{n-2}}$, $\{ M^{(r_{n-1}|\R{n-2})}\}_{r_{n-1}}$, and $\{ K^{(r_{n}|\R{n-1})}\}_{r_n}$ with $\{ {M'}^{(r_{n-2}|\R{n-3})}\}_{r_{n-2}}$, $\{ {K'}^{(r_{n-1}|\R{n-2})}\}_{r_{n-1}}$, and $\{ {M'}^{(r_{n}|\R{n-1})}\}_{r_n}$, without changing $A^{\R{m}}$ and $B^{\R{m}}$ for $m \leq n-3$.
\end{lemma}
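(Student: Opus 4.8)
The plan is to build the reflected three-turn sub-protocol explicitly, using the ricochet property of the maximally entangled state as the engine that moves operations between the two parties. For the unnormalized $\ket{\Phi^+}_\res$ one has $(\I \otimes O)\ket{\Phi^+}_\res = (\tr{O} \otimes \I)\ket{\Phi^+}_\res$ for any resource-qubit operator $O$, and since $X = \sqrt{\mu}\ket{0}\bra{0}+\sqrt{1-\mu}\ket{1}\bra{1}$ is invertible, this upgrades to $(\I_{B,\res} \otimes O)\ket{\res}_\res = ((X\,\tr{O}\,X^{-1})_{A,\res} \otimes \I_{B,\res})\ket{\res}_\res$, and symmetrically for an operation on Alice's resource qubit. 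This is the transposition technique of \cite{Andersson}: any instrument acting on one party's resource qubit can be re-expressed as an instrument on the other party's resource qubit, the price being a conjugation by $X^{\pm 1}$. The rank-2 hypothesis of case (c) makes every block operator $A_{kk}^{\R{n-3}}$, $B_{ll}^{\R{n-3}}$ full rank and hence invertible, which is exactly what guarantees these conjugations are well defined and that the block-diagonal form of Eqs.\ (\ref{eq:ARn}) and (\ref{eq:BRn}) survives the move.

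Concretely, I would freeze $A^{\R{n-3}}$, $B^{\R{n-3}}$ and the resource $\ket{\res}_\res$ and treat the last three operations, Bob's at turn $n-2$, Alice's at turn $n-1$, and Bob's at turn $n$, as a sandwich acting on the control-block structure already forced by Eqs.\ (\ref{eq:Blocks})--(\ref{eq:BRn}). Applying the ricochet identity sector by sector on the control, I would carry Bob's turn-$(n-2)$ operation across the resource to Alice, defining ${M'}^{(r_{n-2}|\R{n-3})}$; carry Alice's turn-$(n-1)$ operation to Bob, defining ${K'}^{(r_{n-1}|\R{n-2})}$; and carry Bob's turn-$n$ operation back to Alice, defining ${M'}^{(r_{n}|\R{n-1})}$, so that the party sequence of the last three turns is reversed from (Bob, Alice, Bob) to (Alice, Bob, Alice). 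By construction the resulting $A^{\R{n}}$, $B^{\R{n}}$ differ from the original ones only through the $X^{\pm 1}$ factors introduced at each ricochet; I would check that these factors telescope so that the sandwich still satisfies Eq.\ (\ref{eq:AccuAction}) and therefore still implements $U_\theta$. Since the construction touches only turns $n-2$ through $n$, the accumulated operators $A^{\R{m}}$, $B^{\R{m}}$ for $m \le n-3$ are left untouched, as required.

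The hard part will be preserving the completeness relations, Eqs.\ (\ref{eq:LOCCimp_onA1})--(\ref{eq:LOCCimp_onB3}), under the reflection. Moving Bob's instrument to Alice replaces each operator by its $X$-conjugated transpose, and a family that resolved the identity on Bob's subsystem will in general not resolve the identity on Alice's; the relation $\sum_{r} K^{(r)\dag}K^{(r)} = \I$ turns into a weighted condition involving $X^{-1}$ that must be restored by renormalizing the new operators and pushing the leftover weights into the neighbouring turns. Coordinating these three redistributions so that all of ${M'}^{(r_{n-2})}$, ${K'}^{(r_{n-1})}$ and ${M'}^{(r_{n})}$ are simultaneously completeness-preserving, while respecting the interlocking constraints among the accumulated operators and the equal-rank conclusions of Lemma~\ref{lem:equalranks}, is the delicate bookkeeping, and it is precisely here that full invertibility (case (c)) is indispensable: it ensures every conjugation and every renormalization is itself invertible, so that no control sector collapses and the reversed protocol remains a legitimate LOCC instrument. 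Once this is established, turn $n-3$ and the new turn $n-2$ are both Alice's and can be merged, reducing the protocol from $n$ to $n-1$ turns.
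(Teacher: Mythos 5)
Your proposal has a genuine gap, and it is not the completeness-relation bookkeeping you flag at the end --- it is the ricochet step itself. The identity $(\I\otimes O)\ket{\Phi^+}=(\tr{O}\otimes\I)\ket{\Phi^+}$ moves operators that act on the \emph{resource} qubits across the entangled pair, but the measurements you want to relocate, e.g.\ $\{K^{(r_{n-2}|\R{n-3})}\}$, act on Bob's whole subsystem $\Hilb{B,\IN}\otimes\Hilb{B,\res}$. In the successful-protocol block form these operators look like $\sum_l \ket{l}\bra{l}\otimes k_l$ with $k_l$ acting on $\Hilb{B,\res}$ and the control index $l$ living on Bob's \emph{input} qubit, which is not part of the shared pair and cannot be ricocheted. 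Transposing the $k_l$ onto Alice's resource qubit would leave you with an operator on $\Hilb{A,\res}$ still controlled by Bob's input value --- a nonlocal operation, not a legitimate local measurement for Alice. A second obstruction: by turn $n-3$ the resource qubits are no longer in the pristine state $(X\otimes\I)\ket{\Phi^+}$; they have been processed by $n-3$ rounds of measurements and are correlated with the inputs, so the precondition for the ricochet identity is not available where you need it.

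What actually makes the redistribution possible in the paper is a structure theorem (Claim 3 of the supplement), which your proposal has no substitute for. Via singular value decompositions of $A_{11}(A_{00})^{-1}$ and $B_{11}^{*}(B_{00}^{*})^{-1}$, determinant constraints forcing the singular values to be $\{\sqrt{\lambda},1/\sqrt{\lambda}\}$ and $\{\sqrt{\mu},1/\sqrt{\mu}\}$, and trace identities on the matrix $E_{00}=\tilde{A}_{00}^{\R{n-3}}X\,\tr{B'}_{00}^{\R{n-3}}$, one shows that the entire dependence of the last three turns on the control values $k,l$ collapses to diagonal factors $\diag(\sqrt{\mu},1/\sqrt{\mu})$ and $\diag(1,e^{-i\delta'})$ locked together by $\mu+e^{-i\delta'}=e^{i\theta}(1+\mu e^{-i\delta'})$, and that $E_{00}^{\dag}E_{00}$ splits into two rank-one pieces along $\ket{0}+\sqrt{\mu}e^{-i\varphi_l}\ket{1}$. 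Only then does the Andersson--Oi result enter, and not as a transposition device but as the splitting algorithm that realizes this two-term decomposition as an explicit two-outcome measurement $\{M'^{(r_{n-2}|\R{n-3})}\}$ for Alice, followed by an input-controlled correction unitary for Bob and a final phase correction for Alice. Without establishing that canonical form, ``telescope the $X^{\pm1}$ factors and renormalize'' has nothing to act on, so the proposal as written cannot be completed along the route you describe.
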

After this replacement, notice that the $(n-3)$th turn and $(n-2)$th turn are both performed by Alice.  These turns can be combined into a single operation, reducing the total number of turns from $n$ to $n-1$.

Hence, for all three types of successful implementation protocols with $n \geq 3$ the total number of turns can be decreased by one.  By induction, all implementation protocols can be transformed to one with three turns.

In the final step of our proof, we show the necessity of a 1-ebit resource.  The following lemma is proved in \cite{supplemental}.
\begin{lemma} \label{lem:dke}
Suppose $n \geq 3$.  If $\rank{A_{00}^{\R{n-3}}} = 2$ and $A_{11}^{\R{n-3}}(A_{00}^{\R{n-3}})^{-1}$ is a unitary operation, then
\begin{equation} \label{eq:dnk}
 \sum_l A_{00}^{\R{n-3}} X \tr{B}_{ll}^{\R{n-3}} \cdot (c.c.) \propto \I.
\end{equation}
\end{lemma}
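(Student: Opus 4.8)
The plan is to compress the deterministic success conditions~(\ref{eq:Blocks}) into a single invertible per-branch matrix identity, solve it for Bob's vectors, and substitute the result into the left-hand side of~(\ref{eq:dnk}); the explicit resource weighting $X$ then cancels against its own inverse, and the unitarity hypothesis supplies the sector symmetry needed for isotropy. First I would, for each branch $\R{n}$, assemble the bras $\bra{a^{\R{n}}_{00}},\bra{a^{\R{n}}_{11}}$ as the rows of a $2\times2$ matrix $\mathcal{A}^{\R{n}}$ and the conjugated kets $\ket{b^{\R{n}*}_{00}},\ket{b^{\R{n}*}_{11}}$ as the columns of a $2\times2$ matrix $\mathcal{B}^{\R{n}}$. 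Equation~(\ref{eq:Blocks}) then reads exactly $\mathcal{A}^{\R{n}} X \mathcal{B}^{\R{n}} = c^{\R{n}}\,\diag(1,e^{i\theta})$, the two vanishing off-diagonal entries and the two nonzero values $\bra{ij}U_\theta\ket{ij}$ filling the four matrix elements. Since $e^{i\theta}\neq1$ makes $\bra{a^{\R{n}}_{00}}$ and $\bra{a^{\R{n}}_{11}}$ linearly independent, $\mathcal{A}^{\R{n}}$ is invertible, and as $X$ is invertible for an entangled resource I can solve $\mathcal{B}^{\R{n}} = c^{\R{n}} X^{-1}(\mathcal{A}^{\R{n}})^{-1}\diag(1,e^{i\theta})$.

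Next I would rewrite the left-hand side of~(\ref{eq:dnk}), reading $\cdot\,(c.c.)$ as right multiplication by the Hermitian conjugate of $A^{\R{n-3}}_{00} X \tr{B}^{\R{n-3}}_{ll}$. Because $B^{\R{n-3}}_{ll}$ is positive, $\tr{B}^{\R{n-3}}_{ll}(\tr{B}^{\R{n-3}}_{ll})^\dagger = \overline{(B^{\R{n-3}}_{ll})^2} = \sum_{r_{n-2},r_n}\ket{b^{\R{n}*}_{ll}}\bra{b^{\R{n}*}_{ll}}$, the last step being the conjugate of~(\ref{fullB}) and nothing but the same ricochet on $\ket{\Phi^+}_\res$ that produced the complex conjugates in~(\ref{eq:Blocks}). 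Substituting the solved columns $\ket{b^{\R{n}*}_{ll}} = c^{\R{n}}X^{-1}(\mathcal{A}^{\R{n}})^{-1}_{\cdot l}$ (up to a unit-modulus phase) and using that $X$ and $A^{\R{n-3}}_{00}$ are Hermitian, the factor $A^{\R{n-3}}_{00}X$ meets $X^{-1}$ and the weighting cancels, giving $\sum_l A^{\R{n-3}}_{00} X \tr{B}^{\R{n-3}}_{ll}\cdot(c.c.) = A^{\R{n-3}}_{00}\big[\sum_{r_{n-2},r_n}|c^{\R{n}}|^2(\mathcal{A}^{\R{n}\dagger}\mathcal{A}^{\R{n}})^{-1}\big]A^{\R{n-3}}_{00}$, with $\mathcal{A}^{\R{n}\dagger}\mathcal{A}^{\R{n}} = \ket{a^{\R{n}}_{00}}\bra{a^{\R{n}}_{00}}+\ket{a^{\R{n}}_{11}}\bra{a^{\R{n}}_{11}}$.

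It then remains to prove that the bracketed weighted sum of inverse Gram matrices is proportional to $(A^{\R{n-3}}_{00})^{-2}$, which is exactly~(\ref{eq:dnk}). Here the hypothesis enters: because $A^{\R{n-3}}_{00}$ and $A^{\R{n-3}}_{11}$ are positive (square roots, by~(\ref{fullA})), unitarity of $A^{\R{n-3}}_{11}(A^{\R{n-3}}_{00})^{-1}$ is equivalent to $A^{\R{n-3}}_{11}=A^{\R{n-3}}_{00}$, so that by~(\ref{fullA}) and~(\ref{eq:ARn-3}) both control sectors carry the identical frame sum $\sum_{r_{n-1}}\ket{a^{\R{n}}_{kk}}\bra{a^{\R{n}}_{kk}} = (A^{\R{n-3}}_{00})^2$ for $k=0,1$. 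Feeding this symmetry, together with the completeness relations~(\ref{eq:LOCCimp_onA2}) and~(\ref{eq:LOCCimp_onB1})--(\ref{eq:LOCCimp_onB3}) that dictate how the outcomes $r_{n-2},r_{n-1},r_n$ are summed, into the bracket should force the off-diagonal pieces to cancel and the two sectors to weigh equally, collapsing it to a scalar multiple of $(A^{\R{n-3}}_{00})^{-2}$.

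I expect this last step to be the main obstacle. The difficulty is that Alice's accumulation~(\ref{fullA}) sums over $r_{n-1}$ whereas Bob's~(\ref{fullB}) sums over $r_{n-2}$ and $r_n$, while the per-branch identity couples all three outcomes through the single scalar $c^{\R{n}}$ and the matrix $\mathcal{A}^{\R{n}}$; in particular one must justify evaluating the $B$-sum at an arbitrary fixed $r_{n-1}$ by invoking the $r_{n-1}$-independence of $(B^{\R{n-3}}_{ll})^2$. Disentangling these mismatched summation ranges and showing that the unitarity of $A^{\R{n-3}}_{11}(A^{\R{n-3}}_{00})^{-1}$ is precisely what balances the two control sectors is the technical heart of the claim, and is what makes the detailed analysis of \cite{supplemental} necessary.
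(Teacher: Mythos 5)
Your reduction is sound as far as it goes: packaging Eq.~(\ref{eq:Blocks}) as $\mathcal{A}^{\R{n}}X\mathcal{B}^{\R{n}}=c^{\R{n}}\,\diag(1,e^{i\theta})$, solving for Bob's vectors, and cancelling $X$ against $X^{-1}$ correctly turns Eq.~(\ref{eq:dnk}) into the statement that $\sum|c^{\R{n}}|^2\,(\mathcal{A}^{\R{n}\dag}\mathcal{A}^{\R{n}})^{-1}\propto(A_{00}^{\R{n-3}})^{-2}$, and your observation that positivity together with unitarity of $A_{11}^{\R{n-3}}(A_{00}^{\R{n-3}})^{-1}$ forces $A_{11}^{\R{n-3}}=A_{00}^{\R{n-3}}$ is also correct (this is exactly the paper's $\lambda^{\R{n-3}}=1$ case). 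But the step you defer --- collapsing the weighted sum of inverse Gram matrices to $(A_{00}^{\R{n-3}})^{-2}$ --- is not a technical loose end; it \emph{is} the lemma, merely rewritten in dual coordinates. A sum of inverses is not the inverse of a sum, and nothing you invoke (the completeness relations, the equality of the two sectors' frame sums) supplies the required cancellation. As written, the proposal does not prove the statement.

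The missing ingredient is the pair of constraints the paper extracts from deterministic success before inverting anything. Setting $\ket{g^{\R{n-2}}}=\tilde{A}_{00}^{\R{n-3}}X\ket{b_{00}^{\R{n-2}*}}$ and $\ket{g^{\R{n-2}\perp}}=\tilde{A}_{00}^{\R{n-3}}X\ket{b_{11}^{\R{n-2}*}}$, Eqs.~(\ref{eq:LOCCimp_tau1})--(\ref{eq:LOCCimp_tau3}) give (i) equal norms, $\norm{\ket{g^{\R{n-2}}}}=\norm{\ket{g^{\R{n-2}\perp}}}$, because $|\kappa_0^{\R{n}}|=|\kappa_1^{\R{n}}|$, and (ii) orthogonality, because once $A_{11}^{\R{n-3}}=A_{00}^{\R{n-3}}$ the cross-term relation forces $\braket{g^{\R{n-2}}}{g^{\R{n-2}\perp}}=e^{-i\theta}\braket{g^{\R{n-2}}}{g^{\R{n-2}\perp}}$ with $e^{-i\theta}\neq1$, hence it vanishes. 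An orthogonal, equal-norm pair in two dimensions resolves the identity, so each fixed $r_{n-2}$ contributes a multiple of $\I$, and summing over $r_{n-2}$ yields Eq.~(\ref{eq:LOCCimp_CI}), which is Eq.~(\ref{eq:dnk}) (your bookkeeping worry about the mismatched sums over $r_{n-1}$ and $r_n$ is dissolved by the paper's redefinitions in Eqs.~(\ref{eq:LOCCimp_bllRn-1})--(\ref{eq:LOCCimp_BRn-2.2b}), which absorb those indices into $\ket{b_{ll}^{\R{n-2}}}$). This argument works directly with the vectors $\tilde{A}_{00}^{\R{n-3}}X\ket{b_{ll}^{\R{n-2}*}}$ and never needs $(\mathcal{A}^{\R{n}})^{-1}$; to salvage your route you would have to translate (i) and (ii) into statements about the dual columns $(\mathcal{A}^{\R{n}})^{-1}_{\cdot l}$, at which point you will have reproduced the paper's proof in different notation.
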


For a 3-turn LOCC protocol, $A^{\R{n-3}}$ and $B^{\R{n-3}}$ are the identity operator by definition.  Clearly, Lemma \ref{lem:dke} can be applied and Eq.\ (\ref{eq:dnk}) now reads,
$\sum_{l} X^2 \propto I$.  Because $X$ is a positive matrix, we see that $X \propto I$.  By the definition of $X$, we have that the resource state should be given by $\ket{\res} = \ket{\Phi^+}/\sqrt{2}$, which has 1 ebit of entanglement. 

In this letter, we analyzed deterministic entanglement-assisted LOCC implementation of two-qubit controlled-unitary operations and showed that any given two-qubit controlled-unitary operation can be implemented by a three-turn protocol, which requires at least 1 ebit of entanglement when the resource is given by a bipartite entangled state with Schmidt number 2.   Our result implies that such a protocol necessarily consumes more entanglement than it can create, raising interesting questions about connections to irreversibility.  This gap between the minimum entanglement cost and entangling power arises due to the requirement of implementation without knowing the inputs, since entanglement cost can be reduced by implementing an input-dependent protocol.

  Our result also indicates that, since it is possible to realize a controlled-unitary operation by composing several controlled-unitary operations with less entangling power, such a decomposition consumes much more entanglement than when the target controlled-unitary operation is directly implemented.  Finally, our proofs are constructive, in that we explicitly give the new protocol that achieves the implementation in fewer steps.

{\it Acknowledgments:}  This work is supported by the Project for Developing Innovation Systems of MEXT, Japan, the Global COE Program, MEXT, Japan, and JSPS by KAKENHI (Grant No. 23540463).   After this work was completed, complementary results \cite{Stahlke} showed that the lower bound here can be generalized to higher dimensions.   The authors also numerically found that by using resource states with Schmidt number greater than 2, it is possible to perform $U_u$ with less than 1 ebit of entanglement.  We thank the authors of this reference for bringing it to our attention.

\appendix

\section{Proof of Lemma~\ref{lem:equalranks}}

In this section, we provide a proof of Lemma \ref{lem:equalranks}, which is divided into two subsections, the first for the $A$ operators and the second for the $B$ operators.  The calculations presented here are also used in the subsequent proofs.

\subsection{The proof for $\rank{A_{00}^{\R{n-3}}}=\rank{A_{11}^{\R{n-3}}}$} \label{subsec:rankAequal}
 In this section, we show the ranks of the block elements of accumulated operators $A_{kk}^{\R{n-3}}$ for $k=0$ and $k=1$ should be same for successful LOCC implementation of controlled-unitary operations.

For any given $\R{n}$, $c^{\R{n}}$, and a full rank $X$, there always exists a unique pair of linearly independent vectors $\{ \bra{b_{00}^{\R{n}}}, \bra{b_{11}^{\R{n}}} \}$  corresponding to the linearly independent vectors $\{ \bra{a_{00}^{\R{n}}}, \bra{a_{11}^{\R{n}}} \}$.  To see this, first recall that $\bra{a_{00}^{\R{n}}}$ and $\bra{a_{11}^{\R{n}}}$ are linearly independent, therefore, (due to their Hilbert spaces being isomorphic), it is possible to decompose $\ket{b_{00}^{\R{n} *}}$ and $\ket{b_{11}^{\R{n}*}}$ as
\[
 \ket{b_{00}^{\R{n}*}} = x^{\R{n}} \ket{a_{00}^{\R{n}}} + y^{\R{n}} \ket{a_{11}^{\R{n}}}
\]
and
\[
 \ket{b_{11}^{\R{n}*}} = \xi^{\R{n}} \ket{a_{00}^{\R{n}}} + \eta^{\R{n}} \ket{a_{11}^{\R{n}}}.
\]

Substituting these equations into Eq.\ (\ref{eq:Blocks}), we obtain
\begin{align*}
 x^{\R{n}} \bra{a_{00}^{\R{n}}}X\ket{a_{00}^{\R{n}}} + y^{\R{n}} \bra{a_{00}^{\R{n}}}X\ket{a_{11}^{\R{n}}} &= c^{\R{n}}, \\
 x^{\R{n}} \bra{a_{11}^{\R{n}}}X\ket{a_{00}^{\R{n}}} + y^{\R{n}} \bra{a_{11}^{\R{n}}}X\ket{a_{11}^{\R{n}}} &= c^{\R{n}} ,\\
 \xi^{\R{n}} \bra{a_{00}^{\R{n}}}X\ket{a_{00}^{\R{n}}} + \eta^{\R{n}} \bra{a_{00}^{\R{n}}}X\ket{a_{11}^{\R{n}}} &= c^{\R{n}}, \\
 \xi^{\R{n}} \bra{a_{11}^{\R{n}}}X\ket{a_{00}^{\R{n}}} + \eta^{\R{n}} \bra{a_{11}^{\R{n}}}X\ket{a_{11}^{\R{n}}} &= c^{\R{n}} e^{i \theta}.
\end{align*}
Defining a matrix $L^{\R{n}} $ by
\[
L^{\R{n}} = \begin{pmatrix} \bra{a_{00}^{\R{n}}}X\ket{a_{00}^{\R{n}}} & \bra{a_{00}^{\R{n}}}X\ket{a_{11}^{\R{n}}}\\ \bra{a_{11}^{\R{n}}}X\ket{a_{00}^{\R{n}}} & \bra{a_{11}^{\R{n}}}X\ket{a_{11}^{\R{n}}}  \end{pmatrix},
\]
the four equations above are equivalent to
\begin{align}
 L^{\R{n}} \cdot \begin{pmatrix} x^{\R{n}} \\ y^{\R{n}} \end{pmatrix} &= c^{\R{n}} \begin{pmatrix} 1 \\ 1 \end{pmatrix}, \\
 L^{\R{n}} \cdot \begin{pmatrix} \xi^{\R{n}} \\ \eta^{\R{n}} \end{pmatrix} &= c^{\R{n}} \begin{pmatrix} 1 \\ e^{i \theta} \end{pmatrix}.
\end{align}
We denote the elements of $(L^{\R{n}})^{-1}$ as
\[
 (L^{\R{n}})^{-1} = \begin{pmatrix} (L^{\R{n}})^{-1}_{00} & (L^{\R{n}})^{-1}_{01} \\ (L^{\R{n}})^{-1}_{10} & (L^{\R{n}})^{-1}_{11} \end{pmatrix}.
\]
With this notation, we have
\begin{align*}
 x^{\R{n}} &= c^{\R{n}} ((L^{\R{n}})^{-1}_{00} + (L^{\R{n}})^{-1}_{01}),\\
 y^{\R{n}} &= c^{\R{n}} ((L^{\R{n}})^{-1}_{10} + (L^{\R{n}})^{-1}_{11},)\\
 \xi^{\R{n}} &= c^{\R{n}} ((L^{\R{n}})^{-1}_{00} + e^{i\theta}(L^{\R{n}})^{-1}_{01}),\\
 \eta^{\R{n}} &= c^{\R{n}} ((L^{\R{n}})^{-1}_{10} + e^{i\theta}(L^{\R{n}})^{-1}_{11}).
 \end{align*}
Note that the right hand side is determined by $\{ \bra{a_{00}^{\R{n}}},\bra{a_{11}^{\R{n}}}, c^{\R{n}}, X \}$, therefore $\bra{{b}_{00}^{\R{n}}}$ and $\bra{{b}_{11}^{\R{n}}}$ are unique functions of $\{ \bra{a_{00}^{\R{n}}},\bra{a_{11}^{\R{n}}}, c^{\R{n}}, X \}$.

This uniqueness of $\bra{b_{ll}^{\R{n}}}$ to $\{ \bra{a_{kk}^{\R{n}}}, c^{\R{n}}, X \}$ implies that the dependence of $\bra{b_{ll}^{\R{n}}}$ of $r_n$ is only through $c^{\R{n}}$.  To see this, for any given $\R{n-1}$ and $r_n$, Eq.\ (\ref{eq:LOCCimp_onA1}) implies that $A^{(\R{n-1},r_n)}$ does not have any $r_n$-dependence which means that $\bra{a_{00}^{\R{n}}}$ and $\bra{a_{11}^{\R{n}}}$ should also not have this $r_n$-dependence, \ie ,
\begin{align}
 \label{pqi} \bra{a_{00}^{(\R{n-1},r_n)}} &= \bra{a_{00}^{(\R{n-1},r'_n)}} ,\\
 \label{pqj} \bra{a_{11}^{(\R{n-1},r_n)}} &= \bra{a_{11}^{(\R{n-1},r'_n)}}
\end{align}
for any $r_n$ and $r'_n$.  We define $\bra{a_{00}^{\R{n-1}}}$ and $\bra{a_{11}^{\R{n-1}}}$ by
\begin{align}
 \label{a00Rn-1} \bra{a_{00}^{\R{n-1}}} &= \bra{a_{00}^{(\R{n-1},0)}} ,\\
 \label{a11Rn-1} \bra{a_{11}^{\R{n-1}}} &= \bra{a_{11}^{(\R{n-1},0)}}
\end{align}
Using this notation, we have, similar to Eq.(\ref{eq:ARn}),
\begin{equation} \label{ARn-1}
 A^{\R{n-1}} = \ket{0}\bra{0} \otimes \bra{a_{00}^{\R{n-1}}} + \ket{1}\bra{1} \otimes \bra{a_{11}^{\R{n-1}}}.
\end{equation}
  
It must be that $L^{\R{n}}$, whose $r_n$-dependence is only through $\bra{a_{00}^{\R{n}}}$ and $\bra{a_{11}^{\R{n}}}$, is independent of $r_n$, from which we learn that
\begin{align}
 \label{b00} \bra{b_{00}^{(\R{n-1},r_n)}} &= \frac{c^{(\R{n-1},r_n)}}{c^{(\R{n-1},r'_n)}} \bra{b_{00}^{(\R{n-1},r'_n)}},\\
 \label{b11} \bra{b_{11}^{(\R{n-1},r_n)}} &= \frac{c^{(\R{n-1},r_n)}}{c^{(\R{n-1},r'_n)}} \bra{b_{11}^{(\R{n-1},r'_n)}}.
\end{align}
Therefore, for two different outcomes $r_n$ and $r'_n$, $\bra{b_{ll}^{(\R{n-1},r_n)}}$ and $\bra{b_{ll}^{(\R{n-1},r'_n)}}$ are collinear.

Defining $\gamma^{\R{n}}$ by
\begin{equation} \label{def:gamma}
 \gamma^{\R{n}} = \frac{c^{(\R{n-1},{r_n})}}{c^{(\R{n-1},{0})}},
\end{equation}
and using Eq.\ (\ref{eq:LOCCimp_onB1}), we see that
\begin{align*}
 & B^{\R{n-1}\dag} B^{\R{n-1}} = \sum_{r_n} B^{(\R{n-1},r_n)\dag} B^{(\R{n-1},r_n)}\\
 &= \sum_{r_n} \ket{0}\bra{0} \otimes |\gamma^{(\R{n-1},r_n)}|^2 \ket{b_{00}^{(\R{n-1},0)}}\bra{b_{00}^{(\R{n-1},0)}} \\
 & \quad + \ket{1}\bra{1} \otimes |\gamma^{(\R{n-1},r_n)}|^2 \ket{b_{11}^{(\R{n-1},0)}}\bra{b_{11}^{(\R{n-1},0)}}\\
 &= \ket{0}\bra{0} \otimes (\sum_{r_n}|\gamma^{(\R{n-1},r_n)}|^2) \ket{b_{00}^{(\R{n-1},0)}}\bra{b_{00}^{(\R{n-1},0)}} \\
 & \quad + \ket{1}\bra{1} \otimes (\sum_{r_n}|\gamma^{(\R{n-1},r_n)}|^2) \ket{b_{11}^{(\R{n-1},0)}}\bra{b_{11}^{(\R{n-1},0)}}.
\end{align*}

Here we prove a claim on the relationship between the square moduli of two $n \times m$ complex matrices.

\begin{claim} \label{lemma:polar_decomp}
 Let $M_{m,n}$ denote the set of $m \times n$ complex matrices.  If two linear operators $T \in M_{m,n}$ and $T' \in M_{m',n}$ satisfy
\begin{equation} \label{eq:lem_tdagt}
 T^\dag T = {T'}^\dag T',
\end{equation}
then there exists an isometry $U$ such that
\[
 T' = UT.
\]
\end{claim}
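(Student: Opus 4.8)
The plan is to construct the isometry directly from its action on the range of $T$, using the fact that the hypothesis $T^\dag T = {T'}^\dag T'$ says precisely that $T$ and $T'$ preserve the same norms. First I would note that for every $x \in \mathbb{C}^n$,
\[
 \|Tx\|^2 = \bra{x} T^\dag T \ket{x} = \bra{x} {T'}^\dag T' \ket{x} = \|T'x\|^2 .
\]
In particular $Tx = 0$ if and only if $T'x = 0$, so $\ker T = \ker T'$, and therefore $\rank{T} = \rank{T'} =: s$, with $\mathrm{ran}\,T \subseteq \mathbb{C}^m$ and $\mathrm{ran}\,T' \subseteq \mathbb{C}^{m'}$ both of dimension $s$.

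Next I would define a candidate map $U_0$ on $\mathrm{ran}\,T$ by $U_0(Tx) = T'x$. The real content of this step is well-definedness: if $Tx = Ty$, then $T(x-y)=0$, whence the norm identity gives $\|T'(x-y)\| = \|T(x-y)\| = 0$, so $T'x = T'y$. Thus $U_0$ is a well-defined linear map from $\mathrm{ran}\,T$ onto $\mathrm{ran}\,T'$, and the same identity shows $\|U_0(Tx)\| = \|T'x\| = \|Tx\|$, so $U_0$ is an isometric isomorphism between these two $s$-dimensional subspaces. By construction $U_0 T = T'$.

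The remaining step is to promote $U_0$ to an isometry $U$ defined on all of $\mathbb{C}^m$. I would fix orthonormal bases of the orthogonal complements $(\mathrm{ran}\,T)^\perp \subset \mathbb{C}^m$ and $(\mathrm{ran}\,T')^\perp \subset \mathbb{C}^{m'}$, map the former isometrically into the latter, and set $U = U_0$ on $\mathrm{ran}\,T$. Then $U$ carries an orthonormal basis of $\mathbb{C}^m$ to an orthonormal set, so $U^\dag U = \mathbb{1}$, and since $U$ agrees with $U_0$ on $\mathrm{ran}\,T$ it still obeys $UT = U_0 T = T'$, completing the argument.

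The main obstacle is a bookkeeping point rather than a conceptual one: the extension in the last step yields a genuine isometry ($U^\dag U = \mathbb{1}$ on $\mathbb{C}^m$) only when $\dim(\mathrm{ran}\,T)^\perp \leq \dim(\mathrm{ran}\,T')^\perp$, i.e.\ $m \leq m'$. This is exactly our situation in the application, where $T'$ arises by vertically stacking the operators $B^{(\R{n-1},r_n)}$ on the right-hand side of Eq.\ (\ref{eq:LOCCimp_onB1}), so that $m' = \abs{\{r_n\}}\cdot m \geq m$; this is the familiar unitary freedom relating two Kraus (Stinespring) representations of one positive operator. In the opposite regime $m > m'$ the same construction produces a co-isometry that still satisfies $T' = UT$, which is all that is needed downstream.
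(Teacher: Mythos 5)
Your argument is correct, and it reaches the same conclusion by a somewhat different route than the paper. The paper's proof sets $H = T^\dag T$ and writes polar decompositions $T = U\sqrt{H}$ and $T' = U'\sqrt{H}$ with respect to the \emph{common} positive square root, then takes $V = U'U^\dag$; you instead build the intertwiner directly, defining $U_0(Tx)=T'x$ on the range of $T$, checking well-definedness and isometry from the identity $\lVert Tx\rVert = \lVert T'x\rVert$, and extending to the whole space. The two arguments are cousins (both rest entirely on that norm identity, i.e.\ on $T$ and $T'$ sharing the absolute value $\sqrt{H}$), but yours has two advantages worth noting. First, it avoids invoking the square root altogether. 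Second, and more substantively, it makes explicit the extension step and the dimension condition $m \le m'$ under which a genuine isometry ($U^\dag U = \I$) exists; the paper's proof glosses over exactly this point, since its assertion that $T = U\sqrt{H}$ with $U^\dag U = \I_{n\times n}$ is not literally attainable when $m < n$ or when $T$ is rank-deficient --- there $U$ is only a partial isometry with initial space $\mathrm{ran}\,\sqrt{H}$, and the composite $V = U'U^\dag$ inherits the same caveat. Your closing remark that the conclusion $T' = UT$ survives in the opposite regime (with $U$ a co-isometry) and that the dimensions work out in the paper's actual applications (where either $m = m'$, as in the use following Eq.\ (\ref{gRn-1}), or $T'$ is a stack of Kraus operators so $m' \ge m$) correctly disposes of the only place where the two proofs could diverge in validity.
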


\begin{proof}
We define
\begin{equation} \label{eq:lem_H}
 H = T^\dag T.
\end{equation}
$T$ has a polar decomposition
\begin{equation} \label{eq:lem_PDT}
 T = U \sqrt{H},
\end{equation}
where $U$ is an isometry satisfying
\[
 U^\dag U = \I_{n \times n}.
\]
Here, $\I_{n \times n}$ denotes the $n \times n$ identity operator.  Eqs.\ (\ref{eq:lem_tdagt}) and (\ref{eq:lem_H}) imply that there exists an isometry $U'$ such that
\begin{equation*}
  T' = U' \sqrt{H},
\end{equation*}
where $U'$ is an isometry satisfying
\[
 {U'}^\dag U' = \I_{n \times n}.
\]
We define a new isometry
\[
 V = U'U^\dag.
\]
We see that
\[
 T' = VT
\]
because
\begin{eqnarray*}
 V T &=&U' U^\dag T = U' U^\dag U \sqrt{H} \\
 &=& U' \cdot \I_{n \times n} \cdot \sqrt{H} = U' \sqrt{H} = T'.
\end{eqnarray*}
\end{proof}

Using Claim \ref{lemma:polar_decomp} and setting $g^{\R{n-1}}$ to be
\begin{equation} \label{gRn-1}
 g^{\R{n-1}} = \sqrt{\sum_{r_n}|\gamma^{(\R{n-1},r_n)}|^2},
\end{equation}
we see that there exists an isometry $U_B^{\R{n-1}}$ such that
\begin{eqnarray}
 B^{\R{n-1}} &= &U_B^{\R{n-1}} \cdot (\ket{0}\bra{0} \otimes g^{\R{n-1}} \bra{b_{00}^{(\R{n-1},0)}} \nonumber \\
 &+& \ket{1}\bra{1} \otimes g^{\R{n-1}} \bra{b_{11}^{(\R{n-1},0)}}).
\end{eqnarray}
By introducing the following notation
\begin{eqnarray} \label{eq:LOCCimp_bllRn-1}
\bra{b_{00}^{\R{n-1}}} &=&g^{\R{n-1}} \bra{b_{00}^{(\R{n-1},0)}}, \\
\bra{b_{11}^{\R{n-1}}} &=& g^{\R{n-1}} \bra{b_{11}^{(\R{n-1},0)}}, \label{eq:LOCCimp_bllRn-1b}
\end{eqnarray}
$B^{\R{n-1}}$ is given by
\begin{equation} \label{eq:LOCCimp_BRn-1}
 B^{\R{n-1}} = U_B^{\R{n-1}} \cdot (\ket{0}\bra{0} \otimes \bra{b_{00}^{\R{n-1}}} + \ket{1}\bra{1} \otimes \bra{b_{11}^{\R{n-1}}}).
\end{equation}
We also define
\begin{eqnarray} \label{eq:LOCCimp_BRn-2.2}
 \bra{b_{00}^{\R{n-2}}} = \bra{b_{00}^{(\R{n-2},0)}} ,\label{eq:LOCCimp_BRn-2.2a}\\
 \bra{b_{11}^{\R{n-2}}} = \bra{b_{11}^{(\R{n-2},0)}}. \label{eq:LOCCimp_BRn-2.2b}
\end{eqnarray}
Eq.\ (\ref{eq:LOCCimp_onB2}) implies that
\begin{eqnarray}
 B^{(\R{n-2},0)\dag} B^{(\R{n-2},0)}  = B^{(\R{n-2},r_{n-1})\dag} B^{(\R{n-2},r_{n-1})}.  \label{eq:LOCCimp_BB}
\end{eqnarray}
Using Eq. (\ref{eq:LOCCimp_BRn-1}), Eq. (\ref{eq:LOCCimp_BB}) is equivalent to
\begin{align}
  \ket{b_{00}^{\R{n-2}}}\bra{b_{00}^{\R{n-2}}} &= \ket{b_{00}^{(\R{n-2},r_{n-1})}}\bra{b_{00}^{(\R{n-2},r_{n-1})}},  \label{eq:LOCCimp_bb00}\\
  \ket{b_{11}^{\R{n-2}}}\bra{b_{11}^{\R{n-2}}} &= \ket{b_{11}^{(\R{n-2},r_{n-1})}}\bra{b_{11}^{(\R{n-2},r_{n-1})}}.  \label{eq:LOCCimp_bb11}
\end{align}
For Eqs.(\ref{eq:LOCCimp_bb00}) and (\ref{eq:LOCCimp_bb11}) to hold, there must be phase factors
$\exp[i \varphi^{(\R{n-2},r_{n-1})}_0]$ and $\exp[i \varphi^{(\R{n-2},r_{n-1})}_1]$
such that
\begin{align}
 \label{eq:LOCCimp_bRn-2.1} \bra{b_{00}^{\R{n-2}}} &= \exp[i \varphi^{(\R{n-2},r_{n-1})}_0] \bra{b_{00}^{(\R{n-2},r_{n-1})}},\\
  \label{eq:LOCCimp_bRn-2.2} \bra{b_{11}^{\R{n-2}}} &= \exp[i \varphi^{(\R{n-2},r_{n-1})}_1]\bra{b_{11}^{(\R{n-2},r_{n-1})}}.
\end{align}

On the other hand, the left hand side of Eq.\ (\ref{eq:Blocks}) can be re-expressed as follows
\begin{align*}
 \bra{a_{kk}^{\R{n}}} X \ket{b_{ll}^{\R{n}*}} &= \gamma^{\R{n}} \bra{a_{kk}^{\R{n-1}}} X \ket{b_{ll}^{(\R{n-1},0)*}} \\
&= \frac{\gamma^{\R{n}}}{g^{\R{n-1}}} \bra{a_{kk}^{\R{n-1}}} X \ket{b_{ll}^{\R{n-1}*}} \\
&= \frac{\gamma^{\R{n}}}{g^{\R{n-1}}} \cdot e^{i \varphi^{(\R{n-2},r_{n-1})}_l} \bra{a_{kk}^{\R{n-1}}} X \ket{b_{ll}^{\R{n-2}*}}.
\end{align*}
By defining
\[
 \kappa_l^{\R{n}} = \frac{g^{\R{n-1}}e^{-i \varphi_{l}^{\R{n-1}}}c^{\R{n}}}{\gamma^{\R{n}}}
\]
and
\[
 \Eps{kl} = \bra{kl} U_{\theta} \ket{kl},
\]
Eq.\ (\ref{eq:Blocks}) is equivalent to
\begin{equation} \label{eq:LOCCimp_*}
 \bra{a_{kk}^{\R{n-1}}} X \ket{b_{ll}^{\R{n-2}*}} = \kappa_l^{\R{n}} \Eps{kl}.
\end{equation}
Eq.\ (\ref{eq:LOCCimp_*}) indicates that
\[
 \kappa_l^{(\R{n-1},r_{n})} = \kappa_l^{(\R{n-1},r'_{n})}
\]
for any $r_n$ and $r'_n$.  We define
\[
 \kappa_l^{\R{n-1}} = \kappa_l^{(\R{n-1},0)}.
\]
By Eq.\ (\ref{eq:LOCCimp_onA2}), we see that
\begin{multline} \label{eq:LOCCimp_doublecircle}
 A^{\R{n-2}\dag}A^{\R{n-2}} = \sum_{r_{n-1}} \ket{0}\bra{0} \otimes \ket{a_{00}^{(\R{n-2},r_{n-1})}}\bra{a_{00}^{(\R{n-2},r_{n-1})}} \\
 + \ket{1}\bra{1} \otimes \ket{a_{11}^{(\R{n-2},r_{n-1})}}\bra{a_{11}^{(\R{n-2},r_{n-1})}}.
\end{multline}
Notice that
\begin{equation} \label{eq:nit}
 \sum_{r_{n-1}} \ket{a_{00}^{(\R{n-2},r_{n-1})}}\bra{a_{00}^{(\R{n-2},r_{n-1})}}
\end{equation}
and 
\[ 
 \sum_{r_{n-1}} \ket{a_{11}^{(\R{n-2},r_{n-1})}}\bra{a_{11}^{(\R{n-2},r_{n-1})}}
\]
are positive semidefinite operators.  We may therefore define $2 \times 2$ matrices $A^{\R{n-2}}_{00}$ and $A^{\R{n-2}}_{11}$ by
\begin{align}
 &A^{\R{n-2}}_{00} = \sqrt{\sum_{r_{n-1}} \ket{a_{00}^{(\R{n-2},r_{n-1})}}\bra{a_{00}^{(\R{n-2},r_{n-1})}}}, \\
& A^{\R{n-2}}_{11} = \sqrt{\sum_{r_{n-1}} \ket{a_{11}^{(\R{n-2},r_{n-1})}}\bra{a_{11}^{(\R{n-2},r_{n-1})}}},
\end{align}
analogous to Eq.(\ref{fullA}).
Using Eq.\ (\ref{eq:LOCCimp_*}) and from the definition of $A_{kk}^{\R{n-1}}$, we see that
\begin{multline} \label{eq:LOCCimp_tau1}
\bra{b_{00}^{\R{n-2}*}}X (A_{00}^{\R{n-2}})^2 X\ket{b_{00}^{\R{n-2}*}}\\
 =\sum_{r_{n-{1}}} |\kappa_0^{(\R{n-2},r_{n-1})}|^2 \\
 = \bra{b_{00}^{\R{n-2}*}}X (A_{11}^{\R{n-2}})^2 X\ket{b_{00}^{\R{n-2}*}}
\end{multline}
\begin{multline} \label{eq:LOCCimp_tau2} 
\bra{b_{11}^{\R{n-2}*}}X (A_{00}^{\R{n-2}})^2 X\ket{b_{11}^{\R{n-2}*}} \\
 =\sum_{r_{n-{1}}} |\kappa_1^{(\R{n-2},r_{n-1})}|^2 \\
 = \bra{b_{11}^{\R{n-2}*}}X (A_{11}^{\R{n-2}})^2 X\ket{b_{11}^{\R{n-2}*}}  
\end{multline}
and
\begin{multline} \label{eq:LOCCimp_tau3}
\bra{b_{00}^{\R{n-2}*}}X (A_{00}^{\R{n-2}})^2 X\ket{b_{11}^{\R{n-2}*}} \\
 = \sum_{r_{n-{1}}} (\kappa_0^{(\R{n-2},r_{n-1})})^* \kappa_1^{(\R{n-2},r_{n-1})} \\
 = e^{-i \theta} \bra{b_{00}^{\R{n-2}*}}X (A_{11}^{\R{n-2}})^2 X\ket{b_{11}^{\R{n-2}*}}.
\end{multline}
In the definition of $\kappa$ we see that the $l$ dependence occurs only in a phase, and in particular
\[
 |\kappa_0^{\R{n}}| = \left| \left(\frac{g^{\R{n-1}}}{\gamma^{\R{n}}}\right)^2 c^{\R{n}} \right| = |\kappa_1^{\R{n}}|,
\]
therefore Eqs.\ (\ref{eq:LOCCimp_tau1}) and (\ref{eq:LOCCimp_tau2}) imply
\begin{multline} \label{eq:LOCCimp_A0b0=A0b1}
 \bra{b_{00}^{\R{n-2}*}} X (A_{00}^{\R{n-2}})^2 X \ket{b_{00}^{\R{n-2}*}}, \\
 = \bra{b_{11}^{\R{n-2}*}} X (A_{00}^{\R{n-2}})^2 X \ket{b_{11}^{\R{n-2}*}}.
\end{multline}

For two vectors
\[
A_{00}^{\R{n-2}} X\ket{b_{00}^{\R{n-2}*}}
\]
 and
\[
A_{00}^{\R{n-2}} X\ket{b_{11}^{\R{n-2}*}},
\]
the Cauchy-Schwarz inequality requires that
\begin{multline} \label{eq:LOCCimp_sigma}
 \bra{b_{00}^{\R{n-2}*}}X (A_{00}^{\R{n-2}})^2 X\ket{b_{00}^{\R{n-2}*}} {\times} \\
 \bra{b_{11}^{\R{n-2}*}}X (A_{00}^{\R{n-2}})^2 X\ket{b_{11}^{\R{n-2}*}} \\
 - |\bra{b_{00}^{\R{n-2}*}}X (A_{00}^{\R{n-2}})^2 X\ket{b_{11}^{\R{n-2}*}}|^2 \geq 0.
\end{multline}
By Eqs.\ (\ref{eq:LOCCimp_tau1}), (\ref{eq:LOCCimp_tau2}), and (\ref{eq:LOCCimp_tau3}), we also have that
\begin{multline} \label{eq:LOCCimp_sigma'}
 \bra{b_{00}^{\R{n-2}*}}X (A_{11}^{\R{n-2}})^2 X\ket{b_{00}^{\R{n-2}*}} {\times}\\
 \bra{b_{11}^{\R{n-2}*}}X (A_{11}^{\R{n-2}})^2 X\ket{b_{11}^{\R{n-2}*}} \\
 - |\bra{b_{00}^{\R{n-2}*}}X (A_{11}^{\R{n-2}})^2 X\ket{b_{11}^{\R{n-2}*}}|^2 \geq 0.
\end{multline}
Notice that the equality condition of Eq.\ (\ref{eq:LOCCimp_sigma}) holds if and only if the equality condition of Eq.\ (\ref{eq:LOCCimp_sigma'}) holds.  From this, we conclude that $A_{00}^{\R{n-2}} X\ket{b_{00}^{\R{n-2}*}}$ and $A_{00}^{\R{n-2}} X\ket{b_{11}^{\R{n-2}*}}$ are linearly independent if and only if $A_{11}^{\R{n-2}} X\ket{b_{00}^{\R{n-2}*}}$ and $A_{11}^{\R{n-2}} X\ket{b_{11}^{\R{n-2}*}}$ are linearly independent.

Recall that $X$ is full rank and observe that $\ket{b_{00}^{\R{n-2}*}}$ and $\ket{b_{11}^{\R{n-2}*}}$ are linearly independent, which can be seen by invoking Eqs.\ (\ref{eq:LOCCimp_BRn-2.2a}) and (\ref{eq:LOCCimp_BRn-2.2b}) to derive
\begin{align*}
 &\ket{b_{00}^{\R{n-2}*}} \propto \ket{b_{00}^{(\R{n-2},0)*}} \propto \ket{b_{00}^{(\R{n-2},0,0)*}}\\
 &\ket{b_{11}^{\R{n-2}*}} \propto \ket{b_{11}^{(\R{n-2},0)*}} \propto \ket{b_{11}^{(\R{n-2},0,0)*}}.
\end{align*}
Now, if $A_{00}^{\R{n-2}}$ is rank 1, then $A_{00}^{\R{n-2}} X\ket{b_{00}^{\R{n-2}*}}$ and $A_{00}^{\R{n-2}} X\ket{b_{11}^{\R{n-2}*}}$ are collinear, which means that so are $A_{11}^{\R{n-2}} X\ket{b_{00}^{\R{n-2}*}}$ and $A_{11}^{\R{n-2}} X\ket{b_{11}^{\R{n-2}*}}$.  Therefore, $A_{11}^{\R{n-2}}$ must also be rank 1.  Switching the roles of $A_{00}^{\R{n-2}}$ and $A_{11}^{\R{n-2}}$, we see that $A_{11}^{\R{n-2}}$ is rank 1 if and only if $A_{00}^{\R{n-2}}$ is rank 1.  Thus we arrive at the following relation
\begin{equation}
 \rank{A_{00}^{\R{n-2}}} = \rank{A_{11}^{\R{n-2}}} ,
\end{equation}
and thus by Eq.~(\ref{eq:LOCCimp_onA3}) we have 
\begin{equation} \label{equalrankA}
 \rank{A_{00}^{\R{n-3}}} = \rank{A_{11}^{\R{n-3}}} .
\end{equation}


\subsection{The proof for  $\rank{B_{00}^{\R{n-3}}}= \rank{B_{11}^{\R{n-3}}}$ in case $\rank{A_{kk}^{\R{n-3}}}=2$} \label{subsec:rankBequal}

In this section, we derive conditions on the rank of the block elements of the accumulated operators ${B_{ll}^{\R{n-3}}}$ in case the rank of $A_{00}^{\R{n-3}}$ is 2. 

First, using Eqs.\ (\ref{b00})-(\ref{eq:LOCCimp_BRn-2.2b}), the operators $B_{00}^{\R{n-3}}$ and $B_{11}^{\R{n-3}}$ defined in Eq.\ (\ref{fullB}) can be rewritten as
\begin{align*}
 B_{00}^{\R{n-3}} &= \sqrt{\sum_{r_{n-2}} \ket{b_{00}^{(\R{n-3},{r_{n-2}})}}\bra{b_{00}^{(\R{n-3},{r_{n-2}})}}},\\
 B_{11}^{\R{n-3}} &= \sqrt{\sum_{r_{n-2}} \ket{b_{11}^{(\R{n-3},{r_{n-2}})}}\bra{b_{11}^{(\R{n-3},{r_{n-2}})}}},
\end{align*}
and
\begin{align}
 \label{Herm1} & \tr{B}_{00}^{\R{n-3}} = B_{00}^{\R{n-3}*} = \sqrt{\sum_{r_{n-2}} \ket{b_{00}^{(\R{n-3},r_{n-2})*}}\bra{b_{00}^{(\R{n-3},r_{n-2})*}}} \\
 \label{Herm2} & \tr{B}_{11}^{\R{n-3}} = B_{11}^{\R{n-3}*} = \sqrt{\sum_{r_{n-2}} \ket{b_{11}^{(\R{n-3},r_{n-2})*}}\bra{b_{11}^{(\R{n-3},r_{n-2})*}}}.
\end{align}

To proceed further, let us prove in detail the following claim which will be used several times in what follows.
\begin{claim} \label{lem:eeff}
 Given two linearly inpdependent vectors $\ket{e_1}$ and $\ket{e_2}$, any two vectors $\ket{f_1}$ and $\ket{f_2}$ such that
\begin{align}
 \label{eq:eeff1} \norm{\ket{e_1}} &= \norm{\ket{f_1}} \\
 \label{eq:eeff2} \norm{\ket{e_2}} &= \norm{\ket{f_2}} \\
 \label{eq:eeff3} \braket{e_1}{e_2} &= e^{-i \theta} \braket{f_1}{f_2}
\end{align}
must be in the form
\begin{align}
 \ket{f_1} &= u \ket{e_1}\\
 \ket{f_2} &= e^{i \theta} u \ket{e_2}.
\end{align}
\end{claim}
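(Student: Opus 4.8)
The plan is to work inside the plane $\mathrm{span}\{\ket{e_1},\ket{e_2}\}$, where the relevant vectors lie, and to \emph{derive} the stated form from the three scalar conditions rather than assuming at the outset that $\ket{f_1}$ and $\ket{f_2}$ are collinear with $\ket{e_1}$ and $\ket{e_2}$. Since $\ket{e_1},\ket{e_2}$ are linearly independent, I would first build an orthonormal frame $\{\ket{g_1},\ket{g_2}\}$ adapted to them by Gram--Schmidt, so that $\ket{e_1}=\alpha\ket{g_1}$ with $\alpha=\norm{\ket{e_1}}>0$ and $\ket{e_2}=\beta\ket{g_1}+\delta\ket{g_2}$ with $\delta\neq0$ encoding the linear independence. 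Writing the unknowns in the same frame as $\ket{f_1}=\alpha_1\ket{g_1}+\alpha_2\ket{g_2}$ and $\ket{f_2}=\beta_1\ket{g_1}+\beta_2\ket{g_2}$, the target form $\ket{f_1}=u\ket{e_1}$, $\ket{f_2}=e^{i\theta}u\ket{e_2}$ becomes the coordinate statements $\alpha_2=0$, $\beta_1\delta=\beta\beta_2$, and the identification of a single common phase $u$ with $|u|=1$.

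I would then translate the hypotheses into coordinate equations: the norm conditions (Eqs.~(\ref{eq:eeff1})--(\ref{eq:eeff2})) read $|\alpha_1|^2+|\alpha_2|^2=\alpha^2$ and $|\beta_1|^2+|\beta_2|^2=|\beta|^2+\delta^2$, while the overlap condition (Eq.~(\ref{eq:eeff3})) reads $\bar\alpha_1\beta_1+\bar\alpha_2\beta_2=e^{i\theta}\alpha\beta$. The key nondegeneracy to exploit is that $\theta$ is the genuinely nonlocal parameter of $U_\theta$, so $e^{i\theta}\neq1$ — the same fact used above to force $\bra{a_{00}^{\R{n}}}$ and $\bra{a_{11}^{\R{n}}}$ to be linearly independent. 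The strategy is to show that the transverse component $\alpha_2$ must vanish; once $\ket{f_1}=u\ket{e_1}$, the first norm condition gives $|u|=1$, and substituting back into the overlap equation pins the $\ket{g_1}$-component of $\ket{f_2}$ to $e^{i\theta}u\beta$, after which the second norm condition forces the $\ket{g_2}$-component to $e^{i\theta}u\delta$ up to a residual phase that the overlap equation removes, yielding $\ket{f_2}=e^{i\theta}u\ket{e_2}$.

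The hard part — and the main obstacle — is precisely this rigidity step: proving that $\alpha_2=0$ and that $\ket{f_2}$ is forced proportional to $\ket{e_2}$. It is delicate because, on their own, the three hypotheses fix only the pairwise Gram data of $(\ket{f_1},\ket{f_2})$, namely the two norms and the single overlap $\braket{f_1}{f_2}=e^{i\theta}\braket{e_1}{e_2}$, and such Gram data determines the pair only up to a joint isometry of the plane, of which the collinear configuration is but one representative; hence the overlap \emph{magnitude} cannot by itself force alignment. The argument must therefore extract the rigidity from the interplay of all three equations together with the nondegeneracy $e^{i\theta}\neq1$ and the fact that the two norms are held \emph{exactly} equal. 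Concretely, I would eliminate $\beta_1,\beta_2$ using the second norm condition and the overlap condition, substitute into the first norm condition, and determine under what conditions the transverse amplitude $\alpha_2$ is driven to zero. Isolating that condition cleanly — and verifying that it is genuinely implied here rather than demanding the additional relation between the two pairs that the surrounding construction supplies — is the crux of the proof, and is where I expect essentially all of the difficulty to concentrate.
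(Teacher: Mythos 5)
There is a genuine gap, and it comes from a misreading of the statement rather than from the coordinate work itself: in the claim, $u$ is a \emph{unitary operator}, not a unimodular scalar. You translate the target form into the coordinate statements $\alpha_2=0$ and $\beta_1\delta=\beta\beta_2$, \ie that $\ket{f_1}$ must be collinear with $\ket{e_1}$ and $\ket{f_2}$ with $\ket{e_2}$ up to a single common phase. Under that reading the claim is false, and your own diagnosis explains why: the hypotheses fix only the Gram data of the pair $(\ket{f_1},e^{-i\theta}\ket{f_2})$, which determines that pair only up to a joint isometry. Concretely, take $\ket{e_1}=\ket{0}$, $\ket{e_2}=\ket{1}$, $\ket{f_1}=(\ket{0}+\ket{1})/\sqrt{2}$, $\ket{f_2}=e^{i\theta}(\ket{0}-\ket{1})/\sqrt{2}$: all three hypotheses hold, yet $\alpha_2\neq 0$ (while the claim holds with $u$ the corresponding basis rotation). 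So the ``crux'' you flag --- driving the transverse amplitude $\alpha_2$ to zero --- is not merely delicate; it is impossible, and no elimination among the three scalar equations will close it. You half-recognize this when you observe that Gram data leaves a joint-isometry freedom, but you then aim the proof at the collinear representative anyway.

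What the claim actually asserts is exactly the Gram-rigidity statement you set aside as ``but one representative'': there exists a single unitary $u$ carrying $\ket{e_1}\mapsto\ket{f_1}$ and $\ket{e_2}\mapsto e^{-i\theta}\ket{f_2}$. The paper's proof is correspondingly short. Define $\ket{f'_2}=e^{-i\theta}\ket{f_2}$, so that Eqs.~(\ref{eq:eeff1})--(\ref{eq:eeff3}) become $\braket{e_1}{e_1}=\braket{f_1}{f_1}$, $\braket{e_2}{e_2}=\braket{f'_2}{f'_2}$, and $\braket{e_1}{e_2}=\braket{f_1}{f'_2}$. Since $\ket{e_1},\ket{e_2}$ are linearly independent, a linear map $T$ with $T\ket{e_1}=\ket{f_1}$ and $T\ket{e_2}=\ket{f'_2}$ exists on their span, and the three equalities give $\bra{\varphi'}T^\dag T\ket{\varphi}=\braket{\varphi'}{\varphi}$ for all $\ket{\varphi}=\alpha\ket{e_1}+\beta\ket{e_2}$ and $\ket{\varphi'}$ in that span, so $T$ is an isometry there; in the paper's two-dimensional setting the span is the whole space, making $T$ the desired unitary $u$. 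Note finally that the nondegeneracy $e^{i\theta}\neq 1$, which you single out as the key fact to exploit, plays no role in this claim at all --- it is used elsewhere in the paper (\eg to force linear independence of $\bra{a^{\R{n}}_{00}}$ and $\bra{a^{\R{n}}_{11}}$) --- and a plan that leans on it here is a further symptom that the target statement was misidentified.
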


\begin{proof}
 Eq.\ (\ref{eq:eeff1}) implies that there exists a unitary operator $u_1$ such that
\[
 \ket{f_1} = u_1 \ket{e_1},
\]
while Eq.\ (\ref{eq:eeff2}) implies that there exists a unitary operator $u_2$ such that
\[
 \ket{f_2} = u_2 \ket{e_2},
\]
Defining
\[
 u'_2 = e^{-i \theta} u_2,
\]
we have
\[
 \ket{f_2} = e^{i\theta} u'_2 \ket{e_2}
\]
and
\[
 e^{-i \theta} \braket{f_1}{f_2} = \bra{e_1}u^\dag_1 u'_2 \ket{e_2}.
\]

Let us define $\ket{f'_2}$ by
\[
 \ket{f'_2} = e^{-i \theta} \ket{f_2}.
\]
Because $\ket{e_1}$ and $\ket{e_2}$ are linearly independent, there exists a linear operator $T$ such that
\begin{align}
 \ket{f_1} &= T \ket{e_1} \\
 \ket{f'_2} &= T \ket{e_2}.
\end{align}

Consider any $\ket{\varphi}$ given as a linear combination of $\ket{e_1}$ and $\ket{e_2}$, \ie
\begin{equation} \label{eq:varphi_e1e2}
 \ket{\varphi} = \alpha \ket{e_1} + \beta \ket{e_2}.
\end{equation}
Notice that Eqs.\ (\ref{eq:eeff1}), (\ref{eq:eeff2}), and (\ref{eq:eeff3}) imply
\begin{align}
 \label{eq:eeff'1} \braket{e_1}{e_1} &= \braket{f_1}{f_1} \\
 \label{eq:eeff'2} \braket{e_2}{e_2} &= \braket{f'_2}{f'_2} \\
 \label{eq:eeff'3} \braket{e_1}{e_2} &= \braket{f_1}{f'_2}.
\end{align}
We see that $T$ must preserve the inner product of any two vectors of the form Eq.\ (\ref{eq:varphi_e1e2}), because
\begin{eqnarray}
 \bra{\varphi'} T^\dag T \ket{\varphi} &= (\alpha'^{*} \bra{e_1} + \beta'^{*} \bra{e_2}) T^\dag \cdot T (\alpha \ket{e_1} + \beta \ket{e_2}) \nonumber 
 \\
 &= \alpha'^{*} \alpha \braket{f_1}{f_1} + \alpha'^{*} \beta \braket{f_1}{f'_2} \nonumber \\
 & + {\beta'}^* \alpha \braket{f'_2}{f_1} + {\beta'}^* \beta \braket{f'_2}{f'_2}\\
 &= \alpha'^{*} \alpha \braket{e_1}{e_1} + \alpha'^{*} \beta \braket{e_1}{e_2} \nonumber \\
 &+ {\beta'}^* \alpha \braket{e_2}{e_1} + {\beta'}^* \beta \braket{e_2}{e_2}\\
 &= \braket{\varphi'}{\varphi}.
\end{eqnarray}
Therefore, $T$ is a unitary operator, which we can write as $u$, proving
\begin{align*}
 \ket{f_1} &= u \ket{e_1}\\
 \ket{f_2} &= e^{i \theta} u \ket{e_2}.
\end{align*}
\end{proof}

By applying Claim \ref{lem:eeff} to Eqs.\ (\ref{eq:LOCCimp_tau1}), (\ref{eq:LOCCimp_tau2}), and (\ref{eq:LOCCimp_tau3}), we see that there exists a $2 \times 2$ unitary operator $u^{\R{n-2}}$ such that
\begin{align}
 \label{eq:LOCCimp_omega1} A_{11}^{\R{n-2}} X \ket{b_{00}^{\R{n-2}*}} &= u^{\R{n-2}} A_{00}^{\R{n-2}} X \ket{b_{00}^{\R{n-2}*}},\\
 \label{eq:LOCCimp_omega2} A_{11}^{\R{n-2}} X \ket{b_{11}^{\R{n-2}*}} &= e^{i\theta} u^{\R{n-2}} A_{00}^{\R{n-2}} X \ket{b_{11}^{\R{n-2}*}}.
\end{align}

Since the $A_{kk}^{\R{n-2}}=A_{kk}^{\R{n-3}}$ are full rank by hypothesis, we can define
\[
 T^{\R{n-2}} = u^{\R{n-2}\dag} A_{11}^{\R{n-2}} (A_{00}^{\R{n-2}})^{-1},
\]
Eqs.\ (\ref{eq:LOCCimp_omega1}) and (\ref{eq:LOCCimp_omega2}) can be transformed to
\begin{align}
 T^{\R{n-2}} \cdot A_{00}^{\R{n-2}} X \ket{b_{00}^{\R{n-2}*}} &= A_{00}^{\R{n-2}} X \ket{b_{00}^{\R{n-2}*}},\\
 T^{\R{n-2}} \cdot A_{00}^{\R{n-2}} X \ket{b_{11}^{\R{n-2}*}} &= e^{i\theta} A_{00}^{\R{n-2}} X \ket{b_{11}^{\R{n-2}*}}.
\end{align}
This implies that $T^{\R{n-2}}$ has eigenvalues of $1$ and $e^{i\theta}$, and corresponding eigenvectors are $A_{00}^{\R{n-2}} X \ket{b_{00}^{\R{n-2}*}}$ and $A_{00}^{\R{n-2}} X \ket{b_{11}^{\R{n-2}*}}$, respectively.

Then $T^{\R{n-2}}$ must have the following eigen-decomposition,
\begin{equation} \label{eq:LOCCimp_xi1}
 T^{\R{n-2}} = S^{\R{n-2}} \begin{pmatrix} 1 & 0\\ 0 & e^{i\theta} \end{pmatrix} (S^{\R{n-2}})^{-1},
\end{equation}
where
\[
 S^{\R{n-2}} = A_{00}^{\R{n-2}} X \ket{b_{00}^{\R{n-2}*}}\bra{0} + A_{00}^{\R{n-2}} X \ket{b_{11}^{\R{n-2}*}}\bra{1}.
\]

Let us consider the singular value decomposition of $T^{\R{n-2}}$,
\begin{equation} \label{eq:LOCCimp_xi2}
 T^{\R{n-2}} = Q^{\R{n-2}} \begin{pmatrix} \sqrt{\lambda_1^{\R{n-2}}} & 0\\ 0 & \sqrt{\lambda_2^{\R{n-2}}} \end{pmatrix} R^{\R{n-2}},
\end{equation}
where $Q^{\R{n-2}}$ and $R^{\R{n-2}}$ are unitary operators and $\sqrt{\lambda_1^{\R{n-2}}}$ and $\sqrt{\lambda_2^{\R{n-2}}}$ are positive real numbers.
Equating Eqs.\ (\ref{eq:LOCCimp_xi1}) and (\ref{eq:LOCCimp_xi2}) and taking the determinant of both sides provide us a relationship
\begin{eqnarray*}
& \det S^{\R{n-2}} \cdot \det \begin{pmatrix} 1 & 0 \\  0& e^{i\theta} \end{pmatrix} \cdot \det (S^{\R{n-2}})^{-1}  \nonumber \\
& = \det Q^{\R{n-2}} \cdot \det \begin{pmatrix} \sqrt{\lambda_1^{\R{n-2}}} & 0 \\ 0 & \sqrt{\lambda_2^{\R{n-2}}} \end{pmatrix} \cdot \det R^{\R{n-2}}.
\end{eqnarray*}
Taking the absolute value of the left hand side, we have
\begin{eqnarray}
 |\det S^{\R{n-2}}| \cdot |\det \begin{pmatrix} 1 & 0\\ 0 & e^{i\theta} \end{pmatrix}| \cdot |\det (S^{\R{n-2}})^{-1}|  \nonumber \\
 = |\det S^{\R{n-2}}| \cdot | e^{i\theta} | \cdot \frac{1}{|\det (S^{\R{n-2}})|} = 1. \label{eq:det1}
\end{eqnarray}
Taking the absolute value of the right hand side, recalling that the determinant of a unitary is a phase, we have
\begin{eqnarray}
& |\det Q^{\R{n-2}}| \cdot |\det \begin{pmatrix} \sqrt{\lambda_1^{\R{n-2}}} & 0 \\ 0& \sqrt{\lambda_2^{\R{n-2}}} \end{pmatrix}| \cdot |\det R^{\R{n-2}}| \nonumber \\
 & = 1 \cdot \sqrt{\lambda_1^{\R{n-2}}}\sqrt{\lambda_2^{\R{n-2}}} \cdot 1. \label{eq:det2}
\end{eqnarray}
Equating Eqs. (\ref{eq:det1}) and (\ref{eq:det2}), we obtain
\[
 \sqrt{\lambda_1^{\R{n-2}}}\sqrt{\lambda_2^{\R{n-2}}} = 1.
\]
We see that the singular value decomposition of $T^{\R{n-2}}$ is restricted to
\begin{equation*}
 T^{\R{n-2}} = Q^{\R{n-2}} \begin{pmatrix} \sqrt{\lambda^{\R{n-2}}} & 0 \\ 0 & \frac{1}{\sqrt{\lambda^{\R{n-2}}}} \end{pmatrix} R^{\R{n-2}}.
\end{equation*}
We define
\begin{equation} \label{tky}
 {A'}_{00}^{\R{n-2}} = R^{\R{n-2}} A_{00}^{\R{n-2}}
\end{equation}
and
\begin{equation} 
{A'}_{11}^{\R{n-2}} = Q^{\R{n-2}\dag} u^{\R{n-2}} A_{11}^{\R{n-2}}.
\end{equation}

Using Eq.(\ref{eq:LOCCimp_onA3}), let us set 
\begin{equation} \label{eq:LOCCimp_AkkRn-3}
 {A'}_{00}^{(\R{n-3},0)} = \tilde{A}_{00}^{\R{n-3}} \mathrm{~and~}  {A'}_{11}^{(\R{n-3},0)} = \tilde{A}_{11}^{\R{n-3}}.
\end{equation}
Denoting $\lambda^{(\R{n-3},0)}$ by $\lambda^{\R{n-3}}$, we have
\begin{equation} \label{eq:LOCCimp_ARn-3}
 \tilde{A}_{11}^{\R{n-3}} = \begin{pmatrix} \sqrt{\lambda^{\R{n-3}}} & 0\\ 0 & \frac{1}{\sqrt{\lambda^{\R{n-3}}}} \end{pmatrix} \cdot \tilde{A}_{00}^{\R{n-3}}.
\end{equation}
Now the following relations hold
\begin{align*}
 A_{00}^{(\R{n-3},r_{n-2})\dag} A_{00}^{(\R{n-3},r_{n-2})} &= \tilde{A}_{00}^{\R{n-3}\dag} \tilde{A}_{00}^{\R{n-3}},\\
 A_{11}^{(\R{n-3},r_{n-2})\dag} A_{11}^{(\R{n-3},r_{n-2})} &= \tilde{A}_{11}^{\R{n-3}\dag} \tilde{A}_{11}^{\R{n-3}}.
\end{align*}

We can rewrite Eqs.\ (\ref{eq:LOCCimp_tau1}), (\ref{eq:LOCCimp_tau2}), and (\ref{eq:LOCCimp_tau3}) as
\begin{multline*}
\bra{b_{00}^{\R{n-2}*}}X \tilde{A}_{00}^{\R{n-3}\dag} \cdot \tilde{A}_{00}^{\R{n-3}} X\ket{b_{00}^{\R{n-2}*}} \\
 = \bra{b_{00}^{\R{n-2}*}}X \tilde{A}_{11}^{\R{n-3}\dag} \cdot \tilde{A}_{11}^{\R{n-3}} X\ket{b_{00}^{\R{n-2}*}},
\end{multline*}
\begin{multline*}
\bra{b_{11}^{\R{n-2}*}}X \tilde{A}_{00}^{\R{n-3}\dag} \cdot \tilde{A}_{00}^{\R{n-3}} X\ket{b_{11}^{\R{n-2}*}} \\
 = \bra{b_{11}^{\R{n-2}*}}X \tilde{A}_{11}^{\R{n-3}\dag} \cdot \tilde{A}_{11}^{\R{n-3}} X\ket{b_{11}^{\R{n-2}*}}
\end{multline*}
and
\begin{multline*}
\bra{b_{00}^{\R{n-2}*}}X \tilde{A}_{00}^{\R{n-3}\dag} \cdot \tilde{A}_{00}^{\R{n-3}} X\ket{b_{11}^{\R{n-2}*}}\\
= e^{-i \theta} \bra{b_{00}^{\R{n-2}*}}X \tilde{A}_{11}^{\R{n-3}\dag} \cdot \tilde{A}_{11}^{\R{n-3}} X\ket{b_{11}^{\R{n-2}*}},
\end{multline*}
which are equivalent to
\begin{multline} \label{eq:LOCCimp_theta1}
 \bra{b_{00}^{\R{n-2}*}}X \tilde{A}_{00}^{\R{n-3}\dag} \cdot \tilde{A}_{00}^{\R{n-3}} X\ket{b_{00}^{\R{n-2}*}}  \\
  = \bra{b_{00}^{\R{n-2}*}}X \tilde{A}_{00}^{\R{n-3}\dag} \cdot \begin{pmatrix} \lambda^{\R{n-3}} & 0\\ 0 & \frac{1}{\lambda^{\R{n-3}}} \end{pmatrix}
  \\ \times \tilde{A}_{00}^{\R{n-3}} X\ket{b_{00}^{\R{n-2}*}},
\end{multline}  
\begin{multline} \label{eq:LOCCimp_theta2}
 \bra{b_{11}^{\R{n-2}*}}X \tilde{A}_{00}^{\R{n-3}\dag} \cdot \tilde{A}_{00}^{\R{n-3}} X\ket{b_{11}^{\R{n-2}*}}  \\
  = \bra{b_{11}^{\R{n-2}*}}X \tilde{A}_{00}^{\R{n-3}\dag} \cdot \begin{pmatrix} \lambda^{\R{n-3}} & 0 \\ 0 & \frac{1}{\lambda^{\R{n-3}}} \end{pmatrix}\\
  \times \tilde{A}_{00}^{\R{n-3}} X\ket{b_{11}^{\R{n-2}*}},
\end{multline}
and
\begin{multline} \label{eq:LOCCimp_theta3}
\bra{b_{00}^{\R{n-2}*}}X \tilde{A}_{00}^{\R{n-3}\dag} \cdot \tilde{A}_{00}^{\R{n-3}} X\ket{b_{11}^{\R{n-2}*}}\\
   = e^{-i \theta} \bra{b_{00}^{\R{n-2}*}}X \tilde{A}_{00}^{\R{n-3}\dag} \cdot \begin{pmatrix} \lambda^{\R{n-3}} & 0\\ 0 & \frac{1}{\lambda^{\R{n-3}}} \end{pmatrix} \\
   \times \tilde{A}_{00}^{\R{n-3}} X\ket{b_{11}^{\R{n-2}*}}.
\end{multline}

We now consider two cases, namely, when $\lambda^{\R{n-3}} \neq 1$ and $\lambda^{\R{n-3}} = 1$, and prove that $\rank{B_{00}^{\R{n-3}}}= \rank{B_{11}^{\R{n-3}}}$ holds for each case.

\subsubsection{The $\lambda^{\R{n-3}} \neq 1$ case} \label{lneq1}

We consider the case of $\lambda^{\R{n-3}} \neq 1$.
Let us define the coefficients of $\tilde{A}_{00}^{\R{n-3}} X\ket{b_{00}^{\R{n-2}*}}$ and $\tilde{A}_{00}^{\R{n-3}} X\ket{b_{11}^{\R{n-2}*}}$ by 
\begin{align*}
 \tilde{A}_{00}^{\R{n-3}} X\ket{b_{00}^{\R{n-2}*}} &= a_0^{\R{n-2}} e^{i \omega_{00}^{\R{n-2}}} \ket{0} + b_0^{\R{n-2}} e^{i \omega_{01}^{\R{n-2}}} \ket{1} \\
 \tilde{A}_{00}^{\R{n-3}} X\ket{b_{11}^{\R{n-2}*}} &= a_1^{\R{n-2}} e^{i \omega_{10}^{\R{n-2}}} \ket{0} + b_1^{\R{n-2}} e^{i \omega_{11}^{\R{n-2}}} \ket{1},
\end{align*}
where $a_0^{\R{n-2}}$, $a_1^{\R{n-2}}$, $b_0^{\R{n-2}}$, $b_1^{\R{n-2}}$, $\omega_{00}^{\R{n-2}}$, $\omega_{01}^{\R{n-2}}$, $\omega_{10}^{\R{n-2}}$, and $\omega_{11}^{\R{n-2}}$ are all real numbers.  With this notation, Eqs.\ (\ref{eq:LOCCimp_theta1}), (\ref{eq:LOCCimp_theta2}), and (\ref{eq:LOCCimp_theta3}) read
\begin{multline} \label{eq:LOCCimp_theta'1}
 (a_0^{\R{n-2}})^2 + (b_0^{\R{n-2}})^2 \\
 = \lambda^{\R{n-3}}(a_0^{\R{n-2}})^2 + \frac{1}{\lambda^{\R{n-3}}}(b_0^{\R{n-2}})^2,
\end{multline} 
\begin{multline} \label{eq:LOCCimp_theta'2}
 (a_1^{\R{n-2}})^2 + (b_1^{\R{n-2}})^2\\
 = \lambda^{\R{n-3}}(a_1^{\R{n-2}})^2 + \frac{1}{\lambda^{\R{n-3}}}(b_1^{\R{n-2}})^2,  
\end{multline} 
 and
\begin{multline} \label{eq:LOCCimp_theta'3}
 a_0^{\R{n-2}} a_1^{\R{n-2}} e^{i (\omega_{00}^{\R{n-2}}-\omega_{10}^{\R{n-2}})} \\
 + b_0^{\R{n-2}} b_1^{\R{n-2}} e^{i (\omega_{01}^{\R{n-2}}-\omega_{11}^{\R{n-2}})} \\
  = e^{-i \theta}(\lambda^{\R{n-3}}a_0^{\R{n-2}} a_1^{\R{n-2}} e^{i (\omega_{00}^{\R{n-2}}-\omega_{10}^{\R{n-2}})} \\ 
+ \frac{1}{\lambda^{\R{n-3}}}b_0^{\R{n-2}} b_1^{\R{n-2}} e^{i (\omega_{10}^{\R{n-2}}-\omega_{11}^{\R{n-2}})}).  
\end{multline}

Eq.\ (\ref{eq:LOCCimp_A0b0=A0b1}) becomes
\begin{equation} \label{eq:LOCCimp_varphi1}
 (a_0^{\R{n-2}})^2 + (b_0^{\R{n-2}})^2 = (a_1^{\R{n-2}})^2 + (b_1^{\R{n-2}})^2.
\end{equation}
Eqs.\ (\ref{eq:LOCCimp_theta'1}) and (\ref{eq:LOCCimp_theta'2}) imply that
\begin{equation} \label{eq:LOCCimp_varphi2}
 \frac{(a_0^{\R{n-2}})^2}{(b_0^{\R{n-2}})^2} = \frac{\frac{1}{\lambda^{\R{n-3}}}-1}{1-\lambda^{\R{n-3}}} = \frac{1}{\lambda^{\R{n-3}}} = \frac{(a_1^{\R{n-2}})^2}{(b_1^{\R{n-2}})^2},
\end{equation}
where the assumption $\lambda^{\R{n-3}} \neq 1$ guarantees that the quotient is well defined.  The last two equations imply that
\begin{equation} \label{eq:LOCCimp_varphi3}
 a_0^{\R{n-2}} = a_1^{\R{n-2}} \mathrm{~and~}  b_0^{\R{n-2}} = b_1^{\R{n-2}}.
\end{equation}
We divide both sides of Eq.\ (\ref{eq:LOCCimp_theta'3}) by $b_0^{\R{n-2}} b_1^{\R{n-2}} e^{i (\omega_{01}^{\R{n-2}}-\omega_{11}^{\R{n-2}})}$ and use Eqs.\ (\ref{eq:LOCCimp_varphi2}) and (\ref{eq:LOCCimp_varphi3}) to derive
\begin{align}
 &\frac{1}{\lambda^{\R{n-3}}} \cdot e^{-i \delta^{\R{n-2}}} + 1 = e^{-i \theta}(e^{-i \delta^{\R{n-2}}} + \frac{1}{\lambda^{\R{n-3}}}) \nonumber \\
 &\Longleftrightarrow \lambda^{\R{n-3}} + e^{-i \delta^{\R{n-2}}}= e^{-i \theta}(1 + \lambda^{\R{n-3}} \cdot e^{-i \delta^{\R{n-2}}}) \label{eq:LOCCimp_psi},
\end{align}
where
\[
 \delta^{\R{n-2}}= \omega_{01}^{\R{n-2}}-\omega_{11}^{\R{n-2}}-\omega_{00}^{\R{n-2}}+\omega_{10}^{\R{n-2}}.
\]
Eq.\ (\ref{eq:LOCCimp_psi}) is equivalent to
\begin{equation} \label{eq:LOCCimp_lambdadelta}
 e^{-i \delta^{\R{n-2}}} = \frac{\lambda^{\R{n-3}} - e^{-i \theta}}{\lambda^{\R{n-3}} e^{-i \theta}-1},
\end{equation}
which implies independence of $r_{n-2}$:
\[
 e^{-i \delta^{(\R{n-3},0)}} = e^{-i \delta^{(\R{n-3},r_{n-2})}}. 
\]
By defining
\[
 \delta^{\R{n-3}} = \delta^{(\R{n-3},0)} \mathrm{~and~} \Delta^{\R{n-2}}= \omega_{01}^{\R{n-2}}-\omega_{00}^{\R{n-2}}
\]
and using Eqs.\ (\ref{eq:LOCCimp_varphi2}) and (\ref{eq:LOCCimp_varphi3}), we obtain
\begin{multline*}
 \tilde{A}_{00}^{\R{n-3}} X \ket{b_{00}^{\R{n-2}*}} \\
 = a_0^{\R{n-2}}e^{i \omega_{00}^{\R{n-2}}}(\ket{0} + \sqrt{\lambda^{\R{n-3}}}e^{i \Delta^{\R{n-2}}}\ket{1})
\end{multline*} 
and
\begin{multline*}
 \tilde{A}_{00}^{\R{n-3}} X \ket{b_{11}^{\R{n-2}*}} \\
 = a_0^{\R{n-2}}e^{i \omega_{10}^{\R{n-2}}}(\ket{0} + \sqrt{\lambda^{\R{n-3}}}e^{i (\Delta^{\R{n-2}}-\delta^{\R{n-2}})}\ket{1}).
\end{multline*}
Introducing
\[
 c_0^{\R{n-3}} = a_0^{\R{n-2}}e^{i \omega_{00}^{\R{n-2}}} \mathrm{~and~} c_1^{\R{n-3}} = a_0^{\R{n-2}}e^{i \omega_{10}^{\R{n-2}}},
\]
we can write these relations as
\begin{multline*}
 \tilde{A}_{kk}^{\R{n-3}} X \ket{b_{ll}^{\R{n-2}*}}
 = c_l^{\R{n-3}} \cdot \begin{pmatrix} \sqrt{\lambda^{\R{n-3}}} & 0\\ 0 & \frac{1}{\sqrt{\lambda^{\R{n-3}}}} \end{pmatrix}^k \\
  \times \begin{pmatrix} 1 &0\\0  & e^{-i\delta^{\R{n-3}}} \end{pmatrix}^l \cdot (\ket{0} + \sqrt{\lambda^{\R{n-3}}}e^{i \Delta^{\R{n-2}}}\ket{1}).
\end{multline*}

For any $k,l,k',l' \in \{0,1\}$ the following holds
\begin{eqnarray} \label{eq:LOCCimp_individualtrace}
& \Tr \tilde{A}_{kk}^{\R{n-3}} X \ket{b_{ll}^{\R{n-2}*}}\bra{b_{ll}^{\R{n-2}*}} X \tilde{A}_{kk}^{\R{n-3}\dag} \nonumber \\
& = \Tr A_{k'k'}^{\R{n-3}} X \ket{b_{l'l'}^{\R{n-2}*}}\bra{b_{l'l'}^{\R{n-2}*}} X A_{k'k'}^{\R{n-3}\dag}.
\end{eqnarray}
This gives
\begin{multline}\label{eq:LOCCimp_DAXB}
 \tilde{A}_{00}^{\R{n-3}} X \tr{B}_{11}^{\R{n-3}}B_{11}^{\R{n-3}*} X \tilde{A}_{00}^{\R{n-3}\dag}\\
 = \sum_{r_{n-2}} \tilde{A}_{00}^{\R{n-3}} X \ket{b_{11}^{(\R{n-3},{r_{n-2}})*}}\bra{b_{11}^{(\R{n-3},{r_{n-2}})*}} X \tilde{A}_{00}^{\R{n-3}\dag}\\
 = \sum_{r_{n-2}} \begin{pmatrix} 1 & 0\\ 0& e^{-i\delta^{\R{n-3}}} \end{pmatrix}
 \cdot \tilde{A}_{00}^{\R{n-3}} X \ket{b_{11}^{(\R{n-3},{r_{n-2}})*}} {\times(c.c.)} \\
 =  \begin{pmatrix} 1 & 0 \\ 0 & e^{-i\delta^{\R{n-3}}} \end{pmatrix} \cdot \tilde{A}_{00}^{\R{n-3}} X \tr{B}_{00}^{\R{n-3}} \times (c.c.).
\end{multline}
Note that $\tilde{A}_{00}^{\R{n-3}}$, $\begin{pmatrix} 1 & 0\\0 & e^{-i\delta^{\R{n-3}}} \end{pmatrix}$, and $X$ are all full rank, implying that their determinants are nonzero.  Taking the determinant of both sides, we obtain
\begin{equation} \label{eq:LOCCimp_sigma2}
 \det \tr{B}_{00}^{\R{n-3}}B_{00}^{\R{n-3}*} = \det \tr{B_{11}^{\R{n-3}}}B_{11}^{\R{n-3}*}
\end{equation}
Because $B_{00}^{\R{n-3}}$ and $B_{11}^{\R{n-3}}$ are $2 \times 2$ nonzero matrices, if the rank of $B_{00}^{\R{n-3}}$ is one, then the left hand side of Eq.\ (\ref{eq:LOCCimp_sigma2}) is zero and thus the rank of $B_{00}^{\R{n-3}}$ is also one.  On the other hand, if $B_{00}^{\R{n-3}}$ is full rank, then so is $B_{00}^{\R{n-3}}$.  Therefore, we have that for any successful LOCC implementaion protocol,
\begin{equation} \label{equalrankBneq1}
 \rank{B_{00}^{\R{n-3}}} = \rank{B_{11}^{\R{n-3}}}
\end{equation}
for $\lambda^{\R{n-3}} \neq 1$.

\subsubsection{The $\lambda^{\R{n-3}} = 1$ case} \label{l=1}

Next, we consider the case when $\lambda^{\R{n-3}} = 1$. Notice that this means that $A_{11}^{\R{n-3}}(A_{00}^{\R{n-3}})^{-1}$ is a unitary operation, which is exactly the assumption of Lemma \ref{lem:dke}.  By Eq.(\ref{eq:LOCCimp_ARn-3}), we have
\begin{equation} \label{tildeAequal}
 \tilde{A}^{\R{n-3}}_{11} = \tilde{A}^{\R{n-3}}_{00}.
\end{equation}  
Eq.\ (\ref{eq:LOCCimp_theta3}) implies
\begin{multline*}
\bra{b_{00}^{\R{n-2}*}}X \tilde{A}_{00}^{\R{n-3}\dag} \cdot \tilde{A}_{00}^{\R{n-3}} X\ket{b_{11}^{\R{n-2}*}} \\
= e^{-i \theta} \bra{b_{00}^{\R{n-2}*}}X \tilde{A}_{11}^{\R{n-3}\dag} \cdot \tilde{A}_{11}^{\R{n-3}} X\ket{b_{11}^{\R{n-2}*}} \\
 = e^{-i \theta} \bra{b_{00}^{\R{n-2}*}}X \tilde{A}_{00}^{\R{n-3}\dag} \cdot \tilde{A}_{00}^{\R{n-3}} X\ket{b_{11}^{\R{n-2}*}}.
\end{multline*}
Because $e^{-i \theta} \neq 1$, the last equation forces
\begin{equation} \label{eq:LOCCimp_A0b0perpA0b1}
 \bra{b_{00}^{\R{n-2}*}}X \tilde{A}_{00}^{\R{n-3}\dag} \cdot \tilde{A}_{00}^{\R{n-3}} X\ket{b_{11}^{\R{n-2}*}} = 0.
\end{equation}
We define $\ket{g^{\R{n-2}}}$ and $\ket{g^{\R{n-2}\perp}}$ by
\begin{align*}
 \ket{g^{\R{n-2}}} &= \tilde{A}_{00}^{\R{n-3}} X\ket{b_{00}^{\R{n-2}*}},\\
 \ket{g^{\R{n-2}\perp}} &= \tilde{A}_{00}^{\R{n-3}} X\ket{b_{11}^{\R{n-2}*}}.
\end{align*}
By Eqs.\ (\ref{eq:LOCCimp_A0b0=A0b1}) and (\ref{eq:LOCCimp_A0b0perpA0b1}), we have
\[
 \norm{\ket{g^{\R{n-2}}}} = \norm{\ket{g^{\R{n-2}}}} \equiv C^{\R{n-2}}
\] 
and
\[ 
\braket{g^{\R{n-2}}}{g^{\R{n-2}\perp}} = 0.
\]
Because $\ket{g^{\R{n-2}}}$ and $\ket{g^{\R{n-2}\perp}}$ are two-dimensional vectors,
\[
 \ket{g^{\R{n-2}}}\bra{g^{\R{n-2}}} + \ket{g^{\R{n-2}\perp}}\bra{g^{\R{n-2}\perp}} = \I_2,
\]
where $\I_2$ is the $2 \times 2$ identity matrix.

Next, we obtain the following relation
\begin{align}
& \tilde{A}_{00}^{\R{n-3}} X \tr{B}_{00}^{\R{n-3}} B_{00}^{\R{n-3}*} X \tilde{A}_{00}^{\R{n-3}\dag} \nonumber \\
& \qquad + \tilde{A}_{00}^{\R{n-3}} X \tr{B}_{11}^{\R{n-3}} B_{11}^{\R{n-3}*} X \tilde{A}_{00}^{\R{n-3}\dag} \nonumber \\
&= \sum_{r_{n-2}} \tilde{A}_{00}^{\R{n-3}} X \ket{b_{00}^{\R{n-2}*}}\bra{b_{00}^{\R{n-2}*}} X \tilde{A}_{00}^{\R{n-3}\dag} \nonumber \\ 
& \qquad + \tilde{A}_{00}^{\R{n-3}} X \ket{b_{11}^{\R{n-2}*}}\bra{b_{11}^{\R{n-2}*}} X \tilde{A}_{00}^{\R{n-3}\dag} \nonumber \\
&= \sum_{r_{n-2}} \ket{g^{\R{n-2}}}\bra{g^{\R{n-2}}} + \ket{g^{\R{n-2}\perp}}\bra{g^{\R{n-2}\perp}} \nonumber \\
&= [\sum_{r_{n-2}} (C^{(\R{n-3},r_{n-2})})^2] \cdot \I_2, \label{eq:LOCCimp_CI}
\end{align}
where we have used Eqs.\ (\ref{Herm1}) and (\ref{Herm2}).
Thus one sees that
\[
\tilde{A}_{00}^{\R{n-3}} X \tr{B}_{00}^{\R{n-3}} B_{00}^{\R{n-3}*} X \tilde{A}_{00}^{\R{n-3}\dag}
\]
and
\[
\tilde{A}_{00}^{\R{n-3}} X \tr{B}_{11}^{\R{n-3}} B_{11}^{\R{n-3}*} X \tilde{A}_{00}^{\R{n-3}\dag}
\]
commute, and are therefore simultaneously diagonalizable:
\begin{multline*}
 \tilde{A}_{00}^{\R{n-3}} X \tr{B}_{00}^{\R{n-3}} B_{00}^{\R{n-3}*} X \tilde{A}_{00}^{\R{n-3}\dag} \\
 = V^{\R{n-3}} \begin{pmatrix} e_0^{\R{n-3}} & \\ & f_0^{\R{n-3}} \end{pmatrix} V^{\R{n-3}\dag},
\end{multline*}
and
\begin{multline*}
\tilde{A}_{00}^{\R{n-3}} X \tr{B}_{11}^{\R{n-3}} B_{11}^{\R{n-3}*} X \tilde{A}_{00}^{\R{n-3}\dag} \\
 = V^{\R{n-3}} \begin{pmatrix} e_1^{\R{n-3}} & \\ & f_1^{\R{n-3}} \end{pmatrix} V^{\R{n-3}\dag}.
\end{multline*}
Also we note that
\begin{multline} \label{eq:LOCCimp_equaltr}  
 \Tr \tilde{A}_{00}^{\R{n-3}} X \tr{B}_{00}^{\R{n-3}} B_{00}^{\R{n-3}*} X \tilde{A}_{00}^{\R{n-3}\dag} \\
 = \sum_{r_{n-2}} \Tr \ket{g^{(\R{n-3},r_{n-2})}}\bra{g^{(\R{n-3},r_{n-2})}} \\
 = \sum_{r_{n-2}} \norm{\ket{g^{(\R{n-3},r_{n-2})}}}^2 \\
 = \sum_{r_{n-2}} \norm{\ket{g^{(\R{n-3},r_{n-2})\perp}}}^2 \\
 = \sum_{r_{n-2}} \Tr \ket{g^{(\R{n-3},r_{n-2})\perp}}\bra{g^{(\R{n-3},r_{n-2})\perp}} \\
 = \Tr \tilde{A}_{00}^{\R{n-3}} X \tr{B}_{11}^{\R{n-3}} B_{11}^{\R{n-3}*} X \tilde{A}_{00}^{\R{n-3}\dag}. 
\end{multline}

Eq.\ (\ref{eq:LOCCimp_CI}) implies that
\[
 e_0^{\R{n-3}} + e_1^{\R{n-3}} = f_0^{\R{n-3}} + f_1^{\R{n-3}},
\]
while Eq.\ (\ref{eq:LOCCimp_equaltr}) implies
\[
 e_0^{\R{n-3}} + f_0^{\R{n-3}} = e_1^{\R{n-3}} + f_1^{\R{n-3}}.
\]
Comparing the two equations, we obtain
\begin{align*}
 e_0^{\R{n-3}} &= f_1^{\R{n-3}}\\
 e_1^{\R{n-3}} &= f_0^{\R{n-3}}.
\end{align*}
The last two equations imply that
\begin{equation} \label{equalrankBeq1}
 \rank{B_{00}^{\R{n-3}}} = \rank{B_{11}^{\R{n-3}}},
\end{equation}
completing the proof.

\section{Proof of Lemma \ref{lem:rank1}}

In this section we provide the proof for Lemma \ref{lem:rank1}.  We begin with case (a).

As discussed in the proof of Lemma \ref{lem:equalranks}, $\bra{a_{kk}^{(\R{n-2},r_{n-1},r_n)}}$ does not have any $r_n$-dependence, which implies that 
\[
 A_{kk}^{\R{n-3}} = \sqrt{\sum_{r_{n-1}} \ket{a^{(\R{n-3},r_{n-2},r_{n-1},0)}_{kk}}\bra{a^{(\R{n-3},r_{n-2},r_{n-1},0)}_{kk}}}.
\]
For this to be rank 1, there must exist complex numbers ${\gamma'}_0^{(\R{n-2},r_{n-1})}$ and ${\gamma'}_0^{(\R{n-2},r_{n-1})}$ such that
\begin{align*}
 \bra{a_{00}^{(\R{n-2},r_{n-1},0)}} &= {\gamma'}_0^{(\R{n-2},r_{n-1})}\bra{a_{00}^{(\R{n-2},0,0)}}\\
 \bra{a_{11}^{(\R{n-2},r_{n-1},0)}} &= {\gamma'}_1^{(\R{n-2},r_{n-1})}\bra{a_{11}^{(\R{n-2},0,0)}}.
\end{align*}

By Eq.\ (\ref{eq:LOCCimp_tau1}), we see that
\[
 |{\gamma'}_0^{(\R{n-2},r_{n-1})}| = |{\gamma'}_1^{(\R{n-2},r_{n-1})}|,
\]
hence ${\gamma'}_0^{(\R{n-2},r_{n-1})}$ and ${\gamma'}_1^{(\R{n-2},r_{n-1})}$ can be rewritten as
\begin{align}
 {\gamma'}_0^{(\R{n-2},r_{n-1})} &= |{\gamma'}_0^{(\R{n-2},r_{n-1})}| \exp(i \arg {\gamma'}_0^{(\R{n-2},r_{n-1})}) \\
 {\gamma'}_1^{(\R{n-2},r_{n-1})} &= |{\gamma'}_0^{(\R{n-2},r_{n-1})}| \exp(i \arg {\gamma'}_1^{(\R{n-2},r_{n-1})}).
\end{align}

Defining a positive number
\[
 {g'}^{\R{n-2}} = \sqrt{\sum_{r_{n-1}} |{\gamma'}_0^{(\R{n-2},r_{n-1})}|^2},
\]
it is easy to see by Eqs.\ (\ref{eq:LOCCimp_onA3}) and (\ref{eq:ARn-3}) that
\begin{multline*}
 A^{\R{n-2}\dag} A^{\R{n-2}}\\
 = ({g'}^{\R{n-2}})^2 \cdot  (\ket{0}\bra{0} \otimes \ket{a_{00}^{(\R{n-2},0,0)}}\bra{a_{00}^{(\R{n-2},0,0)}}\\
 + \ket{1}\bra{1} \otimes \ket{a_{11}^{(\R{n-2},0,0)}}\bra{a_{11}^{(\R{n-2},0,0)}}).
\end{multline*}
Therefore, by Claim \ref{lemma:polar_decomp}, there exists an isometry $U_A^{\R{n-2}}$  such that
\begin{multline}
 A^{\R{n-2}} = U_A^{\R{n-2}} \cdot   (\ket{0}\bra{0} \otimes {g'}^{\R{n-2}} \bra{a_{00}^{(\R{n-2},0)}}\\
 + \ket{1}\bra{1} \otimes {g'}^{\R{n-2}} \bra{a_{11}^{(\R{n-2},0)}}).
\end{multline}

Using Eqs.\ (\ref{a00Rn-1}), (\ref{a11Rn-1}), and (\ref{ARn-1}), it can be checked by direct calculation that the random unitary operation $\{ p^{\R{n-1}},U^{\R{n-1}} \}$ in question is given by
\[
 p^{\R{n-1}} = \frac{|{\gamma'}_0^{(\R{n-2},r_{n-1})}|^2}{({g'}^{\R{n-2}})^2}
\]
and 
\begin{align*}
 U^{\R{n-1}} & =  \begin{pmatrix} \exp(i \arg {\gamma'}_0^{\R{n-1}}) &0\\0  & \exp(i \arg {\gamma'}_1^{\R{n-1}}) \end{pmatrix} \cdot U_A^{\R{n-2}\dag}.
\end{align*} 

Next, we proceed to case (b).  In Eq.\ (\ref{eq:LOCCimp_onB2}), the left hand side is independent of $r_{n-1}$ therefore the right hand should also be independent of $r_{n-1}$ and we set
\[
 B^{\R{n-2}} = B^{(\R{n-2},0)}.
\]
By Eq.\ (\ref{eq:LOCCimp_onB3}) we have
\begin{equation}
 B^{\R{n-3}\dag} B^{\R{n-3}} = \sum_{r_{n-2}} B^{\R{n-1}\dag} B^{\R{n-1}}.
\end{equation}
Combined with Eqs.\ (\ref{eq:LOCCimp_bllRn-1}) and (\ref{eq:LOCCimp_bllRn-1b}), we obtain 
\[
 B_{ll}^{\R{n-3}} = \sqrt{\sum_{r_{n-2}} \ket{b^{(\R{n-2},0)}_{ll}}\bra{b^{(\R{n-2},0)}_{ll}}}.
\]
Because $B_{00}^{\R{n-3}}$ and $B_{11}^{\R{n-3}}$ are both rank 1 by assumption, there exist two complex numbers ${\gamma''}_0^{(\R{n-3},r_{n-2})}$ and ${\gamma''}_1^{(\R{n-3},r_{n-2})}$ such that
\begin{align*}
 \bra{b_{00}^{(\R{n-2},0)}} &= {\gamma''}_0^{(\R{n-3},r_{n-2})}\bra{b_{00}^{(\R{n-3},0,0)}}\\
 \bra{b_{11}^{(\R{n-2},0)}} &= {\gamma''}_1^{(\R{n-3},r_{n-2})}\bra{b_{11}^{(\R{n-3},0,0)}}.
\end{align*}

By Eq.\ (\ref{eq:LOCCimp_A0b0=A0b1}), we see that
\[
 |{\gamma''}_0^{(\R{n-3},r_{n-2})}| = |{\gamma''}_1^{(\R{n-3},r_{n-2})}|.
\]

Defining a positive number
\[
 {g''}^{\R{n-3}} = \sqrt{\sum_{r_{n-2}} |{\gamma''}_0^{(\R{n-3},r_{n-2})}|^2},
\]
it is easy to see that
\begin{multline*}
 B^{\R{n-3}\dag} B^{\R{n-3}} = ({g''}^{\R{n-3}})^2\\
 \times (\ket{0}\bra{0} \otimes \ket{b_{00}^{(\R{n-3},0,0)}}\bra{b_{00}^{(\R{n-3},0,0)}}\\
 + \ket{1}\bra{1} \otimes \ket{b_{11}^{(\R{n-3},0,0)}}\bra{b_{11}^{(\R{n-3},0,0)}}).
\end{multline*}
Therefore, there exists an isometry $U_B^{\R{n-3}}$ by Claim \ref{lemma:polar_decomp} such that
\begin{multline*}
 B^{\R{n-3}} = U_B^{\R{n-3}} \cdot   (\ket{0}\bra{0} \otimes {g''}^{\R{n-3}} \bra{b_{00}^{(\R{n-3},0)}}\\
 + \ket{1}\bra{1} \otimes {g''}^{\R{n-3}} \bra{b_{11}^{(\R{n-3},0)}}).
\end{multline*}

It can be checked by direct calculation that the random unitary operation $\{ q^{\R{n-2}},V^{\R{n-2}} \}$ in question is given by
\[
 q^{\R{n-2}} = \frac{|{\gamma''}_0^{\R{n-2}}|^2}{({g''}^{\R{n-3}})^2}
\]
and 
\begin{align*}
 V^{\R{n-2}} =\begin{pmatrix} \exp(i \arg {\gamma''}_0^{\R{n-2}}) &0\\0  & \exp(i \arg {\gamma''}_1^{\R{n-2}}) \end{pmatrix} \cdot U_B^{\R{n-3}\dag}.
\end{align*}

\section{Proof of Lemma \ref{lem:whenrankA2}}

In this section we present the proof for Lemma \ref{lem:whenrankA2}.  The proof requires the following claim, which will also be proven in this section.

Let us define
\[
 E_{kl} = \tilde{A}_{kk}^{\R{n-3}} X \tr{{B'}}_{ll}^{\R{n-3}}.
\]
\begin{claim} \label{final_lemma}
When $A_{00}^{\R{n-3}}$ and $B_{00}^{\R{n-3}}$ are both rank 2 in a successful protocol, then the accumulated operators $A^{\R{n-3}}$ and $B^{\R{n-3}}$ can be expressed as follows
\begin{multline} \label{AinConv3}
 A^{\R{n-3}} = U_A^{\R{n-3}} \cdot (
 \ket{0}\bra{0} \otimes E_{00} (\tr{{B'}}_{00}^{\R{n-3}})^{-1} X^{-1}\\
 + \ket{1}\bra{1} \otimes E_{00} \begin{pmatrix} 1 & 0 \\ 0 & e^{-i {\delta'}^{\R{n-3}}} \end{pmatrix}  (\tr{{B'}}_{00}^{\R{n-3}})^{-1} X^{-1},
\end{multline}
and
\begin{multline} \label{BinConv3}
 B^{\R{n-3}} = U_B^{\R{n-3}} (\ket{0}\bra{0} \otimes {B'}_{00}^{\R{n-3}}\\
   + \ket{1}\bra{1} \otimes \begin{pmatrix} \sqrt{\mu^{\R{n-3}}} & 0 \\ 0 & \frac{1}{\sqrt{\mu^{\R{n-3}}}} \end{pmatrix}{B'}_{00}^{\R{n-3}})
\end{multline}
where $\mu^{\R{n-3}}$ and ${\delta'}^{\R{n-3}}$ satisfy
\begin{equation} \label{mud}
 \mu^{\R{n-3}} + e^{-i {\delta'}^{\R{n-3}}} = e^{i \theta} (1 + \mu^{\R{n-3}} e^{-i {\delta'}^{\R{n-3}}}).
\end{equation}
Moreoever, there exist phase factors $e^{i \varphi_0}$ and $e^{i \varphi_1}$ such that
\begin{multline} \label{phi0phi1}
 E_{00}^{\R{n-3}\dag} E_{00}^{\R{n-3}} = \frac{|e_{00}|^2 (1+\mu^{\R{n-3}})}{2} \times\\
   \left[\sum_{l=0,1}(\ket{0}+\sqrt{\mu^{\R{n-3}}}e^{-i \varphi_l}\ket{1})(\bra{0}+\sqrt{\mu^{\R{n-3}}}e^{i \varphi_l}\bra{1})\right].
\end{multline}
\end{claim}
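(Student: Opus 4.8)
The plan is to derive all three assertions by mirroring, on Bob's block operators, the diagonalization performed for Alice in Section~\ref{subsec:rankBequal}, and then to couple the two pictures through the operators $E_{kl}=\tilde{A}_{kk}^{\R{n-3}}X\tr{{B'}}_{ll}^{\R{n-3}}$. The symmetry I would exploit is that $U_\theta$ is diagonal and invariant under exchange of the two qubits, so $\Eps{kl}=\Eps{lk}$, and that $A$ and $B$ enter the fundamental constraint (\ref{eq:LOCCimp_*}) on an equal footing up to the complex conjugation in $\ket{b_{ll}^{\R{n}*}}$; this conjugation is exactly what turns the $e^{-i\theta}$ of (\ref{eq:LOCCimp_psi}) into the $e^{i\theta}$ of (\ref{mud}).

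First I would construct the primed operators ${B'}_{ll}^{\R{n-3}}$ and establish (\ref{BinConv3}). Repeating the passage from (\ref{eq:LOCCimp_omega1})--(\ref{eq:LOCCimp_omega2}) to (\ref{eq:LOCCimp_xi2}) with $A$ and $B$ interchanged, I would introduce the $2\times2$ map relating $B_{11}^{\R{n-3}}$ to $B_{00}^{\R{n-3}}$, take its singular value decomposition, and invoke the determinant argument that forced $\sqrt{\lambda_1^{\R{n-2}}\lambda_2^{\R{n-2}}}=1$ in (\ref{eq:det1})--(\ref{eq:det2}) to conclude that its singular values are $\sqrt{\mu^{\R{n-3}}}$ and $1/\sqrt{\mu^{\R{n-3}}}$. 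Absorbing the accompanying unitaries into ${B'}_{ll}^{\R{n-3}}$ gives ${B'}_{11}^{\R{n-3}}=\diag(\sqrt{\mu^{\R{n-3}}},1/\sqrt{\mu^{\R{n-3}}}){B'}_{00}^{\R{n-3}}$; since these blocks reproduce the correct $(B_{ll}^{\R{n-3}})^2$, Claim~\ref{lemma:polar_decomp} supplies the isometry $U_B^{\R{n-3}}$ and hence (\ref{BinConv3}). The constraint (\ref{mud}) then follows by carrying the $B$-analog of (\ref{eq:LOCCimp_theta'1})--(\ref{eq:LOCCimp_theta'3}) through the division step that produced (\ref{eq:LOCCimp_psi})--(\ref{eq:LOCCimp_lambdadelta}); this simultaneously defines ${\delta'}^{\R{n-3}}$ and shows it is independent of the intermediate outcome $r_{n-2}$.

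For (\ref{AinConv3}) I would not identify the two blocks with $\tilde{A}_{00}^{\R{n-3}}$ and $\tilde{A}_{11}^{\R{n-3}}$ directly. The $\ket0\bra0$ block equals $\tilde{A}_{00}^{\R{n-3}}$ immediately from the definition of $E_{00}$. For the $\ket1\bra1$ block I would instead verify that it has the same modulus square as $A_{11}^{\R{n-3}}$, i.e. that the $\dag$-product of $E_{00}\diag(1,e^{-i{\delta'}^{\R{n-3}}})(\tr{{B'}}_{00}^{\R{n-3}})^{-1}X^{-1}$ equals $(A_{11}^{\R{n-3}})^2=\tilde{A}_{00}^{\R{n-3}\dag}\diag(\lambda^{\R{n-3}},1/\lambda^{\R{n-3}})\tilde{A}_{00}^{\R{n-3}}$ (using (\ref{eq:LOCCimp_ARn-3})); a second application of Claim~\ref{lemma:polar_decomp} then absorbs the leftover unitary into $U_A^{\R{n-3}}$. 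This matching is exactly where (\ref{mud}) is used, since it ties the real ratio $\lambda^{\R{n-3}}$ from Alice's singular values to the phase ${\delta'}^{\R{n-3}}$ from Bob's. Finally, for (\ref{phi0phi1}) I would substitute the explicit $B$-adapted vectors $\tilde{A}_{00}^{\R{n-3}}X\ket{b_{ll}^{(\R{n-3},r_{n-2})*}}\propto\ket0+\sqrt{\mu^{\R{n-3}}}e^{-i\varphi_l}\ket1$ (the counterparts of the $\ket0+\sqrt{\lambda^{\R{n-3}}}e^{i\Delta^{\R{n-2}}}\ket1$ vectors of Section~\ref{lneq1}) into $E_{00}^{\R{n-3}\dag}E_{00}^{\R{n-3}}$ and sum over $r_{n-2}$; the $r_{n-2}$-independence of ${\delta'}^{\R{n-3}}$ collapses the sum into the two rank-one terms indexed by $l$ with the common prefactor $\tfrac{|e_{00}|^2(1+\mu^{\R{n-3}})}{2}$.

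The main obstacle will be the reconciliation of the two adapted bases in the previous step. The relation (\ref{eq:LOCCimp_ARn-3}) expresses Alice's blocks with a \emph{real} ratio $\sqrt{\lambda^{\R{n-3}}}$, whereas (\ref{AinConv3}) expresses them in Bob's frame with a \emph{phase} $e^{-i{\delta'}^{\R{n-3}}}$; proving that these describe the same accumulated operator up to $U_A^{\R{n-3}}$ is what forces the consistency condition (\ref{mud}) and demands careful tracking of the outcome-dependent phases $\varphi_l$ together with their independence of $r_{n-2}$. Because of this coupling, the forms (\ref{AinConv3}) and (\ref{BinConv3}), the constraint (\ref{mud}), and the structure (\ref{phi0phi1}) cannot be established in isolation and must be carried through together.
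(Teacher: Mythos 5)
Your overall strategy --- extract $\mu^{\R{n-3}}$ from a unit-determinant singular value decomposition of the operator carrying $B_{00}^{\R{n-3}}$ to $B_{11}^{\R{n-3}}$, then match the $\dag$-products block by block and absorb the residual isometries with Claim~\ref{lemma:polar_decomp} --- is the same as the paper's for the generic case, and your observation that (\ref{mud}) arises from reconciling Alice's real singular-value ratio with Bob's phase is exactly how Eq.~(\ref{eq:LOCCimp_mudelta}) is obtained there. But two steps of your plan have genuine gaps. First, every ingredient you lean on presupposes $\lambda^{\R{n-3}}\neq 1$: the division that produces (\ref{eq:LOCCimp_psi})--(\ref{eq:LOCCimp_lambdadelta}) divides by $1-\lambda^{\R{n-3}}$, so $\delta^{\R{n-3}}$ --- and with it your route to ${\delta'}^{\R{n-3}}$ and to the determinant identity (\ref{eq:LOCCimp_sigma2}) --- is undefined at $\lambda^{\R{n-3}}=1$. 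That case is not vacuous; it is precisely the hypothesis of Lemma~\ref{lem:dke}, and it needs a separate argument: there the vectors $\tilde{A}_{00}^{\R{n-3}}X\ket{b_{00}^{\R{n-2}*}}$ and $\tilde{A}_{00}^{\R{n-3}}X\ket{b_{11}^{\R{n-2}*}}$ become orthogonal, $\mu^{\R{n-3}}$ must be read off from the simultaneous diagonalization in (\ref{eq:LOCCimp_CI})--(\ref{eq:LOCCimp_equaltr}), Eq.~(\ref{kih}) holds for \emph{every} phase so ${\delta'}^{\R{n-3}}$ has to be chosen (not derived) to satisfy (\ref{mud}), and one simply sets $e^{i\varphi_0}=1$, $e^{i\varphi_1}=-1$.

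Second, your derivation of (\ref{phi0phi1}) by ``summing over $r_{n-2}$'' cannot work as described: the two rank-one terms in (\ref{phi0phi1}) are indexed by Bob's basis label $l$, not by measurement outcomes, and the outcome sum $\sum_{r_{n-2}}\ket{b_{00}^{(\R{n-3},r_{n-2})*}}\bra{b_{00}^{(\R{n-3},r_{n-2})*}}=\tr{B}_{00}^{\R{n-3}}B_{00}^{\R{n-3}*}$ sits on the wrong side of the sandwich in $E_{00}^{\dag}E_{00}=(\tr{{B'}}_{00}^{\R{n-3}})^{\dag}X\tilde{A}_{00}^{\R{n-3}\dag}\tilde{A}_{00}^{\R{n-3}}X\tr{{B'}}_{00}^{\R{n-3}}$, so it never collapses into the claimed two terms. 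What actually forces (\ref{phi0phi1}) is the entrywise structure of $E_{00}$ itself: the trace identities (\ref{eq:LOCCimp_R1})--(\ref{eq:LOCCimp_R4}) pin down all four moduli $|e_{ij}|$ up to the factors $\sqrt{\lambda^{\R{n-3}}}$ and $\sqrt{\mu^{\R{n-3}}}$, after which $E_{00}^{\dag}E_{00}$ has diagonal proportional to $\diag(1,\mu^{\R{n-3}})$ and an off-diagonal whose modulus is bounded so that it can be realized as $\sqrt{\mu^{\R{n-3}}}(e^{i\varphi_0}+e^{i\varphi_1})$ times the common prefactor --- a purely algebraic two-phase decomposition, not a sum over outcomes. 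Without these two repairs the claim is established only for generic $\lambda^{\R{n-3}}$ and without the structural statement (\ref{phi0phi1}) on which the construction of $\{{M'}^{(r_{n-2}|\R{n-3})}\}$ in Lemma~\ref{lem:whenrankA2} relies.
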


\begin{proof}
As in the proof of lemma~\ref{lem:equalranks}, we divide the proof for the cases where $\lambda^{\R{n-3}} \neq 1$ and $\lambda^{\R{n-3}} = 1$, and start with the former.  Defining 
\[
 T_B^{\R{n-3}} = B_{11}^{\R{n-3}*} (B_{00}^{\R{n-3}*})^{-1},
\]
Eq.\ (\ref{eq:LOCCimp_sigma2}) is equivalent to
\begin{equation} \label{eq:LOCCimp_sigma2''}
 \det \tr{B}_{00}^{\R{n-3}}B_{00}^{\R{n-3}*} = \det \tr{B_{00}^{\R{n-3}}} \cdot T_B^{\R{n-3}\dag}T_B^{\R{n-3}} \cdot B_{00}^{\R{n-3}*} .
\end{equation}
By hypothesis, $\det B_{00}^{\R{n-3}}$ is nonzero.  Thus dividing Eq.\ (\ref{eq:LOCCimp_sigma2''}) by $\det B_{00}^{\R{n-3}*}$ and $\det \tr{B}_{00}^{\R{n-3}}$, we have
\begin{equation} \label{eq:LOCCimp_sigma2'}
 1 = \det T_B^{\R{n-3}\dag}T_B^{\R{n-3}}.
\end{equation}

Denoting the singular value decomposition of $T_B^{\R{n-3}}$ by
\[
 T_B^{\R{n-3}} = Q_B^{\R{n-3}} \begin{pmatrix} \sqrt{\mu_1^{\R{n-3}}} & 0\\ 0 & \sqrt{\mu_2^{\R{n-3}}} \end{pmatrix} R_B^{\R{n-3}},
\]
Eq.\ (\ref{eq:LOCCimp_sigma2'}) implies that
\[
 1 = \mu_1^{\R{n-3}} \mu_2^{\R{n-3}},
\]
which restricts the singular value decomposition of $T_B^{\R{n-3}}$ to be
\[
 T_B^{\R{n-3}} = Q_B^{\R{n-3}} \begin{pmatrix} \sqrt{\mu^{\R{n-3}}} &0 \\ 0 & \frac{1}{\sqrt{\mu^{\R{n-3}}}} \end{pmatrix} R_B^{\R{n-3}}.
\]

Let us define
\begin{equation} \label{tanomu}
 {B'}_{00}^{\R{n-3}} = R_B^{\R{n-3}*} B_{00}^{\R{n-3}} \mathrm{~and~}  {B'}_{11}^{\R{n-3}} = \tr{Q}^{\R{n-3}}  B_{11}^{\R{n-3}}.
\end{equation}
Recalling the definition of $T_B^{\R{n-3}}$, ${B'}_{00}^{\R{n-3}}$ and ${B'}_{11}^{\R{n-3}}$ have to satisfy
\begin{equation} \label{eq:LOCCimp_B'Rn-3}
 {B'}_{11}^{\R{n-3}} = \begin{pmatrix} \sqrt{\mu^{\R{n-3}}} & 0\\ 0 & \frac{1}{\sqrt{\mu^{\R{n-3}}}} \end{pmatrix} {B'}_{00}^{\R{n-3}}.
\end{equation} 

For any $k,l,k',l' \in \{0,1\}$, Eq.\ (\ref{eq:LOCCimp_individualtrace}) implies that
\begin{eqnarray*}
 &\Tr \tilde{A}_{kk}^{\R{n-3}} X \tr{B}_{ll}^{\R{n-3}}B_{ll}^{\R{n-3}*} X \tilde{A}_{kk}^{\R{n-3}\dag} \nonumber \\
 &= \Tr A_{k'k'}^{\R{n-3}} X \tr{B}_{l'l'}^{\R{n-3}}B_{l'l'}^{\R{n-3}*} X A_{k'k'}^{\R{n-3}\dag}.
\end{eqnarray*}
Using this relation, we have that
\begin{align} 
 & \Tr \tilde{A}_{kk}^{\R{n-3}} X \tr{{B'}}_{ll}^{\R{n-3}}{B'}_{ll}^{\R{n-3}*} X \tilde{A}_{kk}^{\R{n-3}\dag} \nonumber \\
 &= \Tr \tilde{A}_{kk}^{\R{n-3}} X \tr{B}_{ll}^{\R{n-3}}B_{ll}^{\R{n-3}*} X \tilde{A}_{kk}^{\R{n-3}\dag} \nonumber \\
 &= \Tr A_{k'k'}^{\R{n-3}} X \tr{B}_{l'l'}^{\R{n-3}}B_{l'l'}^{\R{n-3}*} X A_{k'k'}^{\R{n-3}\dag} \nonumber \\
 &= \Tr A_{k'k'}^{\R{n-3}} X \tr{{B'}}_{l'l'}^{\R{n-3}}{B'}_{l'l'}^{\R{n-3}*} X A_{k'k'}^{\R{n-3}\dag} \label{eq:LOCCimp_Tr}.
\end{align}
Eqs.\ (\ref{eq:LOCCimp_ARn-3}) and (\ref{eq:LOCCimp_B'Rn-3}) show that $\det \tilde{A}_{kk}^{\R{n-3}}$ and $\det {B'}_{kk}^{\R{n-3}}$ are independent of $k$, and so
\begin{align}
 & \det (\tilde{A}_{kk}^{\R{n-3}} X \tr{{B'}}_{ll}^{\R{n-3}}{B'}_{ll}^{\R{n-3}*} X \tilde{A}_{kk}^{\R{n-3}\dag}) \nonumber \\
 &= \det (\tilde{A}_{00}^{\R{n-3}} X \tr{{B'}}_{00}^{\R{n-3}}{B'}_{00}^{\R{n-3}*} X \tilde{A}_{00}^{\R{n-3}\dag}) \label{eq:LOCCimp_det}.
\end{align} 

Eqs.\ (\ref{eq:LOCCimp_ARn-3}) and (\ref{eq:LOCCimp_B'Rn-3}) imply
\begin{equation} \label{eq:LOCCimp_Emulambda}
 E_{kl} = \begin{pmatrix} \sqrt{\lambda^{\R{n-3}}} & 0\\0 & \frac{1}{\sqrt{\lambda^{\R{n-3}}}} \end{pmatrix}^k \cdot E_{00} \cdot \begin{pmatrix} \sqrt{\mu^{\R{n-3}}} & 0\\ 0 & \frac{1}{\sqrt{\mu^{\R{n-3}}}} \end{pmatrix}^l.
\end{equation}
By choosing $(k,l)=(0,0)$ and $(k',l')=(1,0)$ in Eq.\ (\ref{eq:LOCCimp_Tr}), we have
\begin{eqnarray*}
 &\Tr E_{00}E_{00}^\dag  = \Tr \begin{pmatrix} \sqrt{\lambda^{\R{n-3}}} & 0\\0 & \frac{1}{\sqrt{\lambda^{\R{n-3}}}} \end{pmatrix} \cdot E_{00}E_{00}^\dag \\
 &\cdot \begin{pmatrix} \sqrt{\lambda^{\R{n-3}}} & 0\\0 & \frac{1}{\sqrt{\lambda^{\R{n-3}}}} \end{pmatrix}.
\end{eqnarray*}
We define the elements of $E_{00}$ by
\[
 E_{00} = \begin{pmatrix} e_{00} & e_{01} \\ e_{10} & e_{11} \end{pmatrix},
\]
which gives
\begin{multline} \label{eq:LOCCimp_lambda}
 (|e_{00}|^2 + |e_{01}|^2) + (|e_{10}|^2 + |e_{11}|^2)  \\
 = \lambda^{\R{n-3}}(|e_{00}|^2 + |e_{01}|^2) + \frac{1}{\lambda^{\R{n-3}}} (|e_{10}|^2 + |e_{11}|^2),
\end{multline}
implying
\begin{equation}
 (1 - \lambda^{\R{n-3}}) (|e_{00}|^2 + |e_{01}|^2) = (\frac{1}{\lambda^{\R{n-3}}} - 1) (|e_{10}|^2 + |e_{11}|^2)
\end{equation}
which is equivalent to
\begin{equation}
 \lambda^{\R{n-3}} (|e_{00}|^2 + |e_{01}|^2) = (|e_{10}|^2 + |e_{11}|^2). \label{eq:LOCCimp_R1}
\end{equation}
For $(k,l)=(0,1)$ and $(k',l')=(1,1)$, we perform a similar calculation and obtain
\begin{multline} \label{eq:LOCCimp_R2}
\lambda^{\R{n-3}} (\mu^{\R{n-3}}|e_{00}|^2 + \frac{1}{\mu^{\R{n-3}}}|e_{01}|^2)\\
 = (\mu^{\R{n-3}}|e_{10}|^2 + \frac{1}{\mu^{\R{n-3}}}|e_{11}|^2).
\end{multline}

For $(k,l) = (0,0)$ and $(k',l') = (0,1)$, Eqs.\ (\ref{eq:LOCCimp_tau1}), (\ref{eq:LOCCimp_tau2}), (\ref{eq:LOCCimp_tau3}) and the cyclic property of trace imply
\begin{multline*}
 \Tr E_{00}^\dag E_{00} = \Tr \begin{pmatrix} \sqrt{\mu^{\R{n-3}}} & 0\\0 & \frac{1}{\sqrt{\mu^{\R{n-3}}}} \end{pmatrix} \\
 \cdot E_{00}^\dag E_{00} \cdot \begin{pmatrix} \sqrt{\mu^{\R{n-3}}} & 0\\0 & \frac{1}{\sqrt{\mu^{\R{n-3}}}} \end{pmatrix},
\end{multline*}
from which we obtain
\begin{multline*}
 (|e_{00}|^2 + |e_{10}|^2) + (|e_{01}|^2 + |e_{11}|^2)\\
 = \mu^{\R{n-3}} (|e_{00}|^2 + |e_{10}|^2) + \frac{1}{\mu^{\R{n-3}}}(|e_{01}|^2 + |e_{11}|^2) 
\end{multline*}
that is equivalent to
\begin{equation}
 \label{eq:LOCCimp_R3} \mu^{\R{n-3}} (|e_{00}|^2 + |e_{10}|^2) = (|e_{01}|^2 + |e_{11}|^2).
\end{equation}
For $(k,l) = (1,0)$ and $(k',l') = (1,1)$, we apply similar calculation, resulting in
\begin{multline} \label{eq:LOCCimp_R4}
 \mu^{\R{n-3}} (\lambda^{\R{n-3}} |e_{00}|^2 + \frac{1}{\lambda^{\R{n-3}}} |e_{10}|^2)\\
 = (\lambda^{\R{n-3}} |e_{01}|^2 + \frac{1}{\lambda^{\R{n-3}}} |e_{11}|^2).
\end{multline}

Multiplying Eq.\ (\ref{eq:LOCCimp_R4}) by $\lambda^{\R{n-3}}$ and subtracting it from Eq.\ (\ref{eq:LOCCimp_R3}), we obtain
\begin{equation} \label{eq:LOCCimp_alpha'}
  |e_{01}|^2 = \mu^{\R{n-3}} |e_{00}|^2.
\end{equation}
Multiplying Eq.\ (\ref{eq:LOCCimp_R4}) by $\lambda^{\R{n-3}}$ and Eq.\ (\ref{eq:LOCCimp_R3}) by $(\lambda^{\R{n-3}})^2$, and subtracting the former from the latter, we obtain
\begin{equation} \label{eq:LOCCimp_beta'}
  \mu^{\R{n-3}}|e_{10}|^2 = |e_{11}|^2.
\end{equation}
Using Eqs.\ (\ref{eq:LOCCimp_R1}), (\ref{eq:LOCCimp_alpha'}) , and (\ref{eq:LOCCimp_beta'}), we have
\begin{equation} \label{eq:LOCCimp_gamma'}
  \lambda^{\R{n-3}}|e_{00}|^2 = |e_{10}|^2.
\end{equation}
Therefore, we conclude from Eqs.\ (\ref{eq:LOCCimp_alpha'}), (\ref{eq:LOCCimp_beta'}), and (\ref{eq:LOCCimp_gamma'}) that
\[
 E_{00} = |e_{00}|^2 \cdot \begin{pmatrix} e^{i w} & \sqrt{\mu^{\R{n-3}}} e^{i x}\\ \sqrt{\lambda^{\R{n-3}}} e^{i y} & \sqrt{\lambda^{\R{n-3}} \mu^{\R{n-3}}} e^{i z} \end{pmatrix},
\]
for some real numbers $w$, $x$, $y$, and $z$.

Eq.\ (\ref{eq:LOCCimp_DAXB}) and the definition of ${B'}_{00}^{\R{n-3}}$ and ${B'}_{11}^{\R{n-3}}$ give us
\[
 E_{01}E_{01}^\dag = \begin{pmatrix} 1 & 0\\0 & e^{-i\delta^{\R{n-3}}} \end{pmatrix} E_{00}E_{00}^\dag \begin{pmatrix} 1 & 0\\0 & e^{i\delta^{\R{n-3}}} \end{pmatrix}.
\]
Comparing the $(0,1)$-element of the two sides, we have
\begin{multline*}
\mu^{\R{n-3}} \sqrt{\lambda^{\R{n-3}}} e^{i (w-y)} + \sqrt{\lambda^{\R{n-3}}} e^{i (x-z)}\\
  = e^{i \delta^{\R{n-3}}} (\sqrt{\lambda^{\R{n-3}}} e^{i (w-y)} + \mu^{\R{n-3}} \sqrt{\lambda^{\R{n-3}}} e^{i (x-z)}).
\end{multline*}
Defining $\Delta' = -w + x + y -z$, the last equation is equivalent to
\[
 e^{i \delta^{\R{n-3}}} = \frac{1+\mu^{\R{n-3}}e^{i \Delta'}}{\mu^{\R{n-3}}+e^{i \Delta'}}.
\]
Using Eq.\ (\ref{eq:LOCCimp_lambdadelta}), we have
\[
 \frac{\lambda^{\R{n-3}} - e^{i \theta}}{\lambda^{\R{n-3}} e^{i \theta} - 1} = \frac{1+\mu^{\R{n-3}}e^{i \Delta'}}{\mu^{\R{n-3}}+e^{i \Delta'}}.
\]
It can be easily checked that this equation is equivalent to
\[
 \frac{\mu^{\R{n-3}} - e^{i \theta}}{\mu^{\R{n-3}} e^{i \theta} - 1} = \frac{1+\lambda^{\R{n-3}}e^{i \Delta'}}{\lambda^{\R{n-3}}+e^{i \Delta'}}.
\]
The denominator and the numerator of the left hand side have the same magnitude, implying that there exists ${\delta'}^{\R{n-3}}$ such that
\begin{equation} \label{eq:LOCCimp_mudelta}
 \frac{\mu^{\R{n-3}} - e^{i \theta}}{\mu^{\R{n-3}} e^{i \theta} - 1} = e^{i {\delta'}^{\R{n-3}}}.
\end{equation}

It can be shown that $\tr{E}_{00} E_{00}^*$ and $\tr{E}_{10} E_{10}^*$ satisfy
\begin{equation} \label{eq:LOCCimp_Edelta'}
 \tr{E}_{10} E_{10}^* = \begin{pmatrix} 1 & 0\\ 0& e^{-i{\delta'}^{\R{n-3}}} \end{pmatrix} \tr{E}_{00} E_{00}^* \begin{pmatrix} 1 & 0\\0 & e^{i{\delta'}^{\R{n-3}}} \end{pmatrix},
\end{equation}
which can be seen as follows.  First, we observe that
\begin{equation}
 \tr{E}_{00} E_{00}^* = \left(\begin{array}{cc} W_{00} & W_{01} \\W_{10} & W_{11}\end{array}\right)
\end{equation}
and 
\begin{equation}
 \tr{E}_{10} E_{10}^*  = \left(\begin{array}{cc} W_{00} & W'_{01} \\W'_{10} & W_{11}\end{array}\right)
\end{equation}
where the elements $W_{kl}$ and $W'_{kl}$ are given by 
\begin{align*}
W_{00} &= 1+\lambda^{\R{n-3}}, \\
W_{01} &= \sqrt{\mu^{\R{n-3}}} \left( e^{i (w-x)} + \lambda^{\R{n-3}} e^{i (y-z)} \right ),  \\
W_{10} &= \sqrt{\mu^{\R{n-3}}} \left(e^{-i (w-x)} + \lambda^{\R{n-3}}  e^{-i (y-z)} \right ), \\
W_{11} &= \mu^{\R{n-3}}(1+\lambda^{\R{n-3}}),\\
W'_{01} &= \sqrt{\mu^{\R{n-3}}} \left( \lambda^{\R{n-3}} e^{i (w-x)} +  e^{i (y-z)} \right ),  \\
W'_{10} &= \sqrt{\mu^{\R{n-3}}} \left(\lambda^{\R{n-3}} e^{-i (w-x)} +   e^{-i (y-z)} \right ).
\end{align*}

Now,
\begin{multline*}
 \lambda^{\R{n-3}} \sqrt{\mu^{\R{n-3}}} e^{i (w-x)} + \sqrt{\mu^{\R{n-3}}} e^{i (y-z)} \\
 = (\sqrt{\mu^{\R{n-3}}} e^{i (w-x)} + \lambda^{\R{n-3}} \sqrt{\mu^{\R{n-3}}} e^{i (y-z)}) \\
 \times \frac{\lambda^{\R{n-3}} \sqrt{\mu^{\R{n-3}}} e^{i (w-x)} + \sqrt{\mu^{\R{n-3}}} e^{i (y-z)}}{\sqrt{\mu^{\R{n-3}}} e^{i (w-x)} + \lambda^{\R{n-3}} \sqrt{\mu^{\R{n-3}}} e^{i (y-z)}} \\
= (\sqrt{\mu^{\R{n-3}}} e^{i (w-x)} + \lambda^{\R{n-3}} \sqrt{\mu^{\R{n-3}}} e^{i (y-z)}) \\
 \times \frac{\mu^{\R{n-3}}-e^{i\theta}}{\mu^{\R{n-3}}e^{i\theta}-1} \\
 = (\sqrt{\mu^{\R{n-3}}} e^{i (w-x)} + \lambda^{\R{n-3}} \sqrt{\mu^{\R{n-3}}} e^{i (y-z)}) \times e^{i {\delta'}^{\R{n-3}}},
\end{multline*}
where the last equality follows from Eq.\ (\ref{eq:LOCCimp_mudelta}).  Therefore, Eq.\ (\ref{eq:LOCCimp_Edelta'}) holds.

Finally, using Eq.\ (\ref{eq:LOCCimp_Emulambda}) we have
\begin{multline*} 
 \tr{E}_{kl} E_{kl}^* = \begin{pmatrix} \sqrt{\mu^{\R{n-3}}} & 0\\0 & \frac{1}{\sqrt{\mu^{\R{n-3}}}} \end{pmatrix}^l \begin{pmatrix} 1 & 0\\0 & e^{-i{\delta'}^{\R{n-3}}} \end{pmatrix}^k \\
\times  \tr{E}_{00} E_{00}^* \begin{pmatrix} 1 & 0\\ 0 & e^{i{\delta'}^{\R{n-3}}} \end{pmatrix}^k \begin{pmatrix} \sqrt{\mu^{\R{n-3}}} & 0\\0 & \frac{1}{\sqrt{\mu^{\R{n-3}}}} \end{pmatrix}^l
\end{multline*}
and
\[
 \mu^{\R{n-3}} + e^{-i {\delta'}^{\R{n-3}}} = e^{i \theta} (1 + \mu^{\R{n-3}} e^{-i {\delta'}^{\R{n-3}}}),
\]
which follows directly from Eq.\ (\ref{eq:LOCCimp_mudelta}).

Using Claim \ref{lemma:polar_decomp} with Eqs.\ (\ref{tanomu}) and (\ref{eq:LOCCimp_B'Rn-3}), we prove that, for the case $\lambda^{\R{n-3}} \neq 1$, Eq.\ (\ref{BinConv3}) holds.  On the other hand, using Claim \ref{lemma:polar_decomp} with Eqs.\ (\ref{eq:LOCCimp_onA3}), (\ref{tky}), (\ref{eq:LOCCimp_AkkRn-3}), (\ref{eq:LOCCimp_Edelta'}), we see that Eq.\ (\ref{AinConv3}) holds.  Finally, because
\[
 \left|\frac{2 \sqrt{\mu^{\R{n-3}}} e^{i (x-w)}}{1+\lambda^{\R{n-3}}} + \frac{2 \lambda^{\R{n-3}}\sqrt{\mu^{\R{n-3}}} e^{i (z-y)}}{1+\lambda^{\R{n-3}}} \right| \leq 2 \sqrt{\mu^{\R{n-3}}},
\]
there exist phase factors $e^{i \varphi_0}$ and $e^{i \varphi_1}$ such that Eq.\ (\ref{phi0phi1}) holds.  This completes the proof of the lemma for the case $\lambda^{\R{n-3}} \neq 1$.

Next, let us assume $\lambda^{\R{n-3}} = 1$.  In this case, we have
\begin{equation}
 \label{klm1} \tilde{A}_{00}^{\R{n-3}} X \tr{B}_{00}^{\R{n-3}} = V^{\R{n-3}} \begin{pmatrix} \sqrt{e_0^{\R{n-3}}} & 0\\0 & \sqrt{f_0^{\R{n-3}}} \end{pmatrix} W_0^{\R{n-3}*}
\end{equation}
and
\begin{multline}
\label{klm2} \tilde{A}_{00}^{\R{n-3}} X \tr{B}_{11}^{\R{n-3}} \\
 = \tilde{A}_{00}^{\R{n-3}} X \tr{B}_{00}^{\R{n-3}} \cdot \tr{\ W}_0^{\R{n-3}} \begin{pmatrix} \sqrt{\mu^{\R{n-3}}} & 0\\ 0& \frac{1}{\sqrt{\mu^{\R{n-3}}}} \end{pmatrix} W_1^{\R{n-3}*},
\end{multline}
where $W_0^{\R{n-3}}$ and $W_1^{\R{n-3}}$ are some unitary operators and $\mu^{\R{n-3}} = f_0^{\R{n-3}} / e_0^{\R{n-3}}$, which is well defined because $B_{00}^{\R{n-3}}$ is assumed to be full rank.

We define
\begin{align}
 \label{B'toB:0} {B'}_{00}^{\R{n-3}} &= W_0^{\R{n-3}} B_{00}^{\R{n-3}} \\
 \label{B'toB:1} {B'}_{11}^{\R{n-3}} &= W_1^{\R{n-3}} B_{11}^{\R{n-3}},
\end{align}
which together with Eq.\ (\ref{klm2}) give us
\begin{equation} \label{unf1}
 {B'}_{11}^{\R{n-3}} = \begin{pmatrix} \sqrt{\mu^{\R{n-3}}} & 0\\ 0 & \frac{1}{\sqrt{\mu^{\R{n-3}}}} \end{pmatrix}{B'}_{00}^{\R{n-3}}
\end{equation}
and 
\begin{equation} \label{eq:unf2}
 {B'}_{ll}^{\R{n-3}\dag}{B'}_{ll}^{\R{n-3}} = B_{ll}^{\R{n-3}\dag}B_{ll}^{\R{n-3}}.
\end{equation}
Using Claim \ref{lemma:polar_decomp} with Eqs.\ (\ref{unf1}) and (\ref{eq:unf2}), we prove that, for the case $\lambda^{\R{n-3}}=1$, Eq.\ (\ref{BinConv3}) holds.

We also define
\[
 E_{kl}^{\R{n-3}} = V^{\R{n-3}\dag}\tilde{A}_{kk}^{\R{n-3}} X \tr{{B'}}_{ll}^{\R{n-3}}.
\]
It is easy to see that
\begin{equation} \label{kih}
 \tr{E_{10}}E_{10}^* = \begin{pmatrix} 1 & 0 \\ 0 & e^{-i{\delta'}^{\R{n-3}}} \end{pmatrix} \tr{E_{00}}E_{00}^* \begin{pmatrix} 1 & 0 \\ 0 & e^{i{\delta'}^{\R{n-3}}} \end{pmatrix}
\end{equation}
for any $e^{i{\delta'}^{\R{n-3}}}$, but we choose $e^{i{\delta'}^{\R{n-3}}}$ so that
\[
 \mu^{\R{n-3}} + e^{-i{\delta'}^{\R{n-3}}} = e^{-i \theta} (1 + \mu^{\R{n-3}}e^{-i{\delta'}^{\R{n-3}}}).
\]
Using Claim \ref{lemma:polar_decomp} with Eqs.\ (\ref{eq:LOCCimp_onA3}), (\ref{tky}), (\ref{eq:LOCCimp_AkkRn-3}), (\ref{eq:unf2}), and (\ref{kih}), we see that, for the case $\lambda^{\R{n-3}}=1$, Eq.\ (\ref{AinConv3}) holds.
Finally, it suffices to choose
\[
 e^{i \varphi_0} = 1 \mathrm{~and~} e^{i \varphi_1} = -1
\]
for Eq.\ (\ref{phi0phi1}).

\end{proof}

In Ref.~\cite{Andersson}, an algorithm is presented to construct a measurement operation such that an accumulated operator $T$ `splits' into $T'$ and $T''$, where
\begin{equation}
 T^\dag T = {T'}^\dag T' + {T''}^\dag T''.
\end{equation}
Therefore, we can find a two-outcome measurement $\{ {M'}^{(r_{n-2}|\R{n-3})} \}_{r_{n-2} = 0,1}$ such that
\begin{multline}
 {M'}^{(r_{n-2}|\R{n-3})} A^{\R{n-3}}= \sqrt{\frac{|e_{00}|^2 (1+\mu^{\R{n-3}})}{2}} \\
 \times \left\{\ket{0}\bra{0} \otimes (\bra{0}+\sqrt{\mu^{\R{n-3}}}e^{i \varphi_{r_{n-2}}}\bra{1})  \right.\\
 \left. + \ket{1}\bra{1} \otimes (\bra{0}+\sqrt{\mu^{\R{n-3}}}e^{i (\varphi_{r_{n-2}}-{\delta'}^{\R{n-3}})}\bra{1})\right\}\\
 \times \left[ I \otimes \tr{({B'}^{\R{n-3}*}X)}^{-1} \right].
\end{multline}

Bob's new measurement operation $\{{K'}^{(r_{n-1}|\R{n-2})}\}_{r_{n-1}=0,1}$ is given by
\begin{multline*}
 K^{(r_{n-1}|\R{n-2})} = (\ket{0}_{B,\IN}\bra{0} \otimes \bra{r_{n-1}}_{B,\res}\\
 + \ket{1}_{B,\IN}\bra{1} \otimes \bra{r_{n-1}}_{B,\res} e^{i \varphi_{r_{n-2}}} \cdot {u'}^{\R{n-3}\dag}) \cdot U_B^{\R{n-3}\dag},
\end{multline*}
where ${u'}^{\R{n-3}\dag}$ is a unitary operator that satisfies
\begin{multline}
 \begin{pmatrix} 1 & 0\\ 0 & e^{-i {\delta'}^{\R{n-3}}} \end{pmatrix}^k \times \begin{pmatrix} \sqrt{\mu^{\R{n-3}}} & 0 \\ 0 & \frac{1}{\sqrt{\mu^{\R{n-3}}}} \end{pmatrix}^l \\
 \times (\ket{0}_{B,\res}+\sqrt{\mu^{\R{n-3}}}e^{i \varphi_x}\ket{1}_{B,\res}) = \\
 \Eps{kl} ({u'}^{\R{n-3}})^l \times \begin{pmatrix} 1 & 0 \\ 0 & e^{-i {\delta'}^{\R{n-3}}} \end{pmatrix}^k (\ket{0}_{B,\res}+\sqrt{\mu^{\R{n-3}}}e^{i \varphi_x}\ket{1}_{B,\res}),
\end{multline}
whose existence is guaranteed by Eq.\ (\ref{mud}) and Claim~\ref{lem:eeff}.

Alice's final measurement operation $\{ {M'}^{(r_n|\R{n-1})} \}_{r_n} \}$ is now given by
\begin{align}
 \{ {M'}^{(r_n|\R{n-2},0)} \} &= \{ \I_{A,\IN} \}\\
 \{ {M'}^{(r_n|\R{n-2},1)} \} &= \begin{pmatrix} 1 & 0 \\ 0 & e^{i {\delta'}^{\R{n-3}}} \end{pmatrix}_{A,\IN}.
\end{align}
This completes the construction of the new measurements required in Lemma~\ref{lem:whenrankA2}.

\section{Proof of Lemma \ref{lem:dke}}

Finally, it can be seen that Eqs.~(\ref{fullA}), (\ref{eq:nit}), (\ref{tky}), and (\ref{eq:LOCCimp_CI}) imply Eq.~(\ref{eq:dnk}), proving Lemma~\ref{lem:dke}.


\end{document}